\newtheorem{theorem}{Theorem}
\newtheorem{lemma}{Lemma}
\newcommand{\qed}{\hfill\tikz{\draw[draw=black,line width=0.6pt] (0,0)
    rectangle (2.8mm,2.8mm);}\bigskip}
\newenvironment{proof}[1][]{
  \noindent\emph{
    Proof\ifthenelse{\equal{#1}{}}{.}{ of #1.}
  }
  \hspace{0.3cm}}
{\hfill\qed\medskip}
\newcommand{\vast}{\bBigg@{3}}
\newcommand{\Vast}{\bBigg@{4}}
\newcommand{\Vvast}{\bBigg@{5}}
\newcommand{\VAST}{\bBigg@{6}}
\def\@email#1#2{
 \endgroup
 \patchcmd{\titleblock@produce}
  {\frontmatter@RRAPformat}
  {\frontmatter@RRAPformat{\produce@RRAP{*#1\href{mailto:#2}{#2}}}\frontmatter@RRAPformat}
  {}{}
}
\begin{document}

\newcommand{\RPic}[2][{}]{\hspace{-0.4mm}\pbox{\textwidth}{\includegraphics[#1]{{#2}}}\hspace{-0.4mm}}
\newcommand{\FPic}[2][{}]{\hspace{-0.27mm}\pbox{\textwidth}{\includegraphics[#1]{{#2}}}\hspace{-0.27mm}}
\newcommand{\ybox}[1]{\begin{ytableau} #1 \end{ytableau}}

\newcommand{\SUN}{\mathsf{SU}(N)}
\newcommand{\SU}[1]{\mathsf{SU}(#1)}
\newcommand{\GLN}{\mathsf{GL}(N)}
\newcommand{\MixedPow}[2]{V^{\otimes
    #1}\otimes\left(V^*\right)^{\otimes #2}}
\newcommand{\Pow}[1]{V^{\otimes #1}}

\preprint{LU-TP 22-56, MCnet-22-18}

\title{Wigner $6j$ symbols for ${\rm SU}(N)$: Symbols with at least two quark-lines}

\author{Judith Alcock-Zeilinger}
\affiliation{Erwin Schr\"odinger Institute for Mathematics and Physics, University of Vienna, Boltzmanngasse 9, A-1090 Wien, Austria}
\affiliation{Fachbereich Mathematik, Universit\"at T\"ubingen, Auf der Morgenstelle 10, 72076 T\"ubingen, Germany}

\author{Stefan Keppeler}
\affiliation{Fachbereich Mathematik, Universit\"at T\"ubingen, Auf der Morgenstelle 10, 72076 T\"ubingen, Germany}

\author{Simon Pl\"atzer}
\affiliation{Institute of Physics, NAWI Graz, University of Graz, Universit\"atsplatz 5, A-8010 Graz, Austria}
\affiliation{Particle Physics, Faculty of Physics, University of Vienna, Boltzmanngasse 5, A-1090 Wien, Austria}
\affiliation{Erwin Schr\"odinger Institute for Mathematics and Physics, University of Vienna, Boltzmanngasse 9, A-1090 Wien, Austria}

\author{Malin Sjodahl}
\affiliation{Department of Astronomy and Theoretical Physics, Lund University,
Box 43, 221 00 Lund, Sweden}
\affiliation{Erwin Schr\"odinger Institute for Mathematics and Physics, University of Vienna, Boltzmanngasse 9, A-1090 Wien, Austria}

\date{29 September 2022}

\begin{abstract}

  We study a class of $\SUN$ Wigner $6j$~symbols involving two
  fundamental representations, and derive explicit formulae for all
  $6j$~symbols in this class. Our formulae express the $6j$~symbols in
  terms of the dimensions of the involved representations, and they
  are thereby functions of $N$. We view these explicit formulae as a
  first step towards efficiently decomposing $\SUN$ color structures
  in terms of group invariants.
\end{abstract}

\maketitle

\section{Introduction}
\label{sec:intro}

A unique feature of the characteristic quantum numbers of the strong
force as described by Quantum Chromodynamics (QCD) is that they are
confined and not observable. In order to extract observable quantities
from QCD scattering amplitudes one has to average over external color
quantum numbers or, otherwise, project onto definite hadronic states.
In both cases the color quantum numbers enter calculations on a
similar footing as internal interfering quantum mechanical degrees of
freedom. This allows for the usage of color bases which leave out
information on the states within an irreducible representation (irrep)
of $\SU{3}$.

Despite this simplification, one of the challenges in multi-parton QCD
calculations is the accurate description of the large color
space. Often color-summed/-averaged so-called trace\cite{Paton:1969je,
  Berends:1987cv, Mangano:1987xk, Mangano:1988kk, Kosower:1988kh,
  Nagy:2007ty, Sjodahl:2009wx, Alwall:2011uj, Sjodahl:2014opa,
  Platzer:2012np,Platzer:2018pmd} and color-flow
bases\cite{'tHooft:1973jz,Kanaki:2000ms,Maltoni:2002mq,Platzer:2013fha,AngelesMartinez:2018cfz,DeAngelis:2020rvq,Platzer:2020lbr}
are used. These bases take advantage of the possibility to ignore the
internal structure of $\SU{3}$ irreps, but they also ignore the irreps
altogether, i.e.\ the basis vectors are in no correspondence to the
intermediate states in which a set of partons transforms. Moreover,
for a finite number $N$ of colors, trace and color-flow bases are
non-orthogonal and overcomplete, i.e., strictly speaking, they are not
even bases but only spanning sets. The size of theses spanning sets
grows roughly as a factorial in the number $n_{g+q\overline{q}}$ of
gluons and $q\overline{q}$-pairs \cite{Keppeler:2012ih}. Since these
spanning sets are non-orthogonal, this translates, in the worst case,
to a factorial square scaling, $(n_{g+q\overline{q}}!)^2$, for the
number of inner products that have to be calculated. In a full color
description of a scattering cross section this growth will be
prohibitive if one calculates all contributing terms and cannot use
additional information about the amplitudes or exploit Monte Carlo
methods to sample over color structures, as {\it e.g.} done in
\cite{DeAngelis:2020rvq}.

An ideal basis would be both orthogonal and minimal, allowing for the
smallest number of terms needed when expanding amplitudes and
correlation functions in color structures. 
These properties are
combined in multiplet
bases\cite{Kyrieleis:2005dt,Dokshitzer:2005ig,Sjodahl:2008fz,Beneke:2009rj,
  Keppeler:2012ih,
  Du:2015apa,Sjodahl:2015qoa,Keppeler:2013yla,Alcock-Zeilinger:2016bss,Alcock-Zeilinger:2016sxc,Alcock-Zeilinger:2016cva,Sjodahl:2018cca},
which use representation theory to iteratively group partons into
orthogonal states. So far, the use of multiplet bases has been rather
limited, in part likely due to the lack of explicit bases for many
partons, i.e.\ {\it the} situation in which they would really be
advantageous.

In this paper we suggest taking the usage of representation theory one
step further: Instead of \textit{explicitly} created bases, we advocate
using Wigner $6j$~coefficients (or $6j$~symbols or just $6j$s -- we
use all these terms interchangeably in this paper) for calculations in
color space.  This, however, assumes that the required $6j$
coefficients have been calculated and are readily available also for a
high number of partons. In
Refs.~\citenum{Sjodahl:2015qoa,Sjodahl:2018cca} explicit bases were
used to calculate $6j$s for a limited number of partons.  This allows
for a fast decomposition at use-time, i.e.\ when the $6j$s
(corresponding to the same bases) are used in actual computations of
amplitudes, but the factorial growth of the spanning set remains a
challenge at construction time, thereby effectively limiting the
number of involved particles to one or a few handfuls.
  
When decomposing color structures into multiplet bases using $6j$s, no
explicit bases are needed; i.e.\ all calculations are performed in
terms of $\SUN$ group invariant $6j$ coefficients, along with
dimensions of representations and Wigner $3j$~coefficients (or
$3j$~symbol, or $3j$ for short), which may be normalized to $1$.  This
poses the question if it should not also be possible to derive these
invariants in terms of themselves. More precisely: Can one derive a
consistent set of $6j$ coefficients only in terms of group invariants,
specifically the dimensions of representations?

In this paper we answer this question affirmatively when at least two
of the irreps involved in a $6j$~symbol are fundamental
representations, i.e.\ quark-lines, that do not share a common
vertex. In \Cref{thm:6j-closed-form-expression} we present explicit
formulae for the absolute values of all $6j$~symbols in this class. We
also explain how to iteratively fix and determine the signs of these
$6j$s, and for $N\leq3$ we prove (whereas for $N>3$ we conjecture)
that this procedure always leads to a consistent set of signs. In
particular, we have thus determined these $6j$~symbols in the
phenomenologically relevant case $N=3$. For $N=2$ the problem was
generally solved before, see e.g.\ Ref.~\citenum{Johansson:2016}. In
future work we will show how to determine the other $6j$s
required for a full color decomposition.  We note that other
approaches for calculating $SU(3)$ $6j$s in terms of Clebsch-Gordan
coefficients exist\cite{Alex_2011,Dytrych:2021qwe}, but stress that
our method exploits group invariants \textit{only}.

We view our results as a first step towards a complete reduction of
color space in terms of $\SUN$ group invariants, which has the
potential to significantly simplify fixed-order as well as all-order
calculations in color space, ranging from analytic approaches up to
Monte Carlo methods. The reduction of color space in terms of
invariants also has the potential to provide further insight into
other aspects such as the color structure of hadronization models
\cite{Gieseke:2018gff,Platzer:2022jny}.

This paper is organized as follows: \Cref{sec:computational} gives a
brief introduction to the birdtrack method for $\SUN$, illustrating
how $6j$~symbols appear and how they can be used to decompose more
general color structures. In \Cref{sec:6j-alpha-add-two-quark-lines}
we introduce the particular class of $6j$~symbols of interest to the
present paper, and in \Cref{sec:properties} we describe properties of
general $6j$~symbols, as well as of the class of symbols studied
here. \Cref{sec:closed-form-expressions} constitutes the main part of
this paper, wherein the closed form expressions of the $6j$s are
presented in \Cref{thm:6j-closed-form-expression}. The relevance of
these results, as well as future work complementing them, are
discussed in \Cref{sec:conclusions}.

  \section{Birdtrack methods for~\texorpdfstring{$\SUN$}{SU(N)} color space}
\label{sec:computational}

In this section we briefly outline how to utilize the birdtrack method
for decomposing group invariant (color) structures in terms of
dimensions, Wigner-$3j$ and Wigner-$6j$~symbols. For a full,
comprehensive introduction to the birdtrack formalism, readers are
referred to Ref.~\citenum{Cvitanovic:2008zz}. The hasty reader finds a
minimal introduction in Appendix~A of Ref.~\citenum{Keppeler:2012ih},
whereas a more pedagogical account can be found in
Ref.~\citenum{Keppeler:2017kwt}. Examples of birdtrack calculations
for QCD can be found in Ref. \citenum{Du:2015apa,Sjodahl:2015qoa}.

We start out with a reminder that implicit indices of states within a
representation are always summed over.
We therefore have, for an irrep $\alpha$,
\begin{equation}
  \label{eq:rep-loop}
  \raisebox{-0.5 cm}{\includegraphics[scale=0.5]{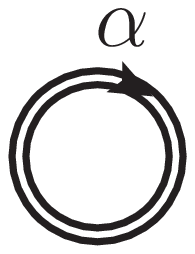}} 
  =d_\alpha\;,  
\end{equation} 
i.e.\ the sum of states within an irrep adds up to
the dimension  $d_\alpha$ of that irrep.

The second simplest color structure that may be encountered, which
also contains a sum, is the ``self energy'' diagram 
\begin{equation}
  \label{eq:self-energy}
  \raisebox{-0.8 cm}{\includegraphics[scale=0.5]{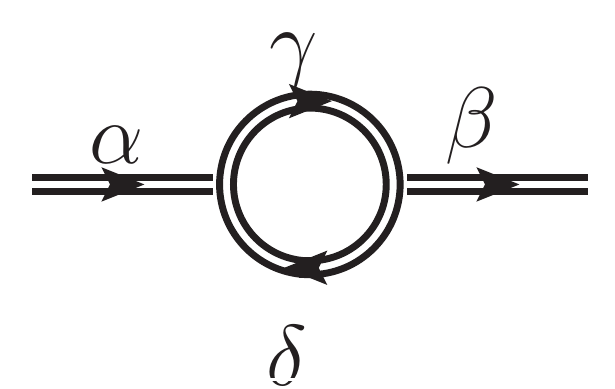}} 
  = \frac{\includegraphics[scale=0.5]{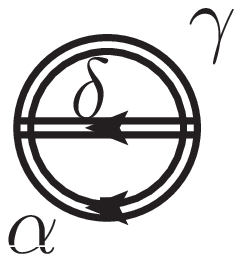}}{d_\alpha}
  \raisebox{-0.1 cm}{\includegraphics[scale=0.5]{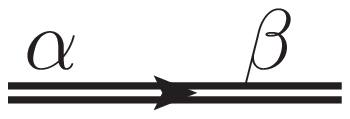}}
  \ ,
\end{equation} 
where the free line is to be understood as a Kronecker delta in
the representation indices $\alpha$ and $\beta$, and 
the normalization constant can be found by contracting indices
on both sides; as a consistency check, we have 
\begin{equation}
  \label{eq:self-energy-contracted}
  \raisebox{-1 cm}{\includegraphics[scale=0.5]{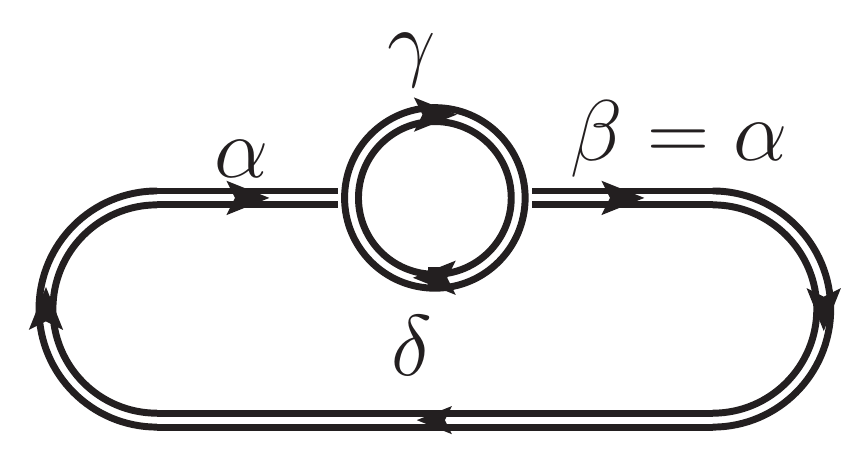}}
  =
  \frac{\includegraphics[scale=0.5]{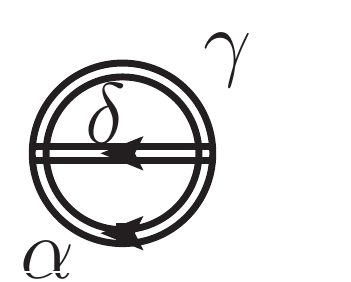}}{d_\alpha}
  d_\alpha
  =  \raisebox{-0.8cm}
  {\includegraphics[scale=0.5]{self-energy-contracted-2}}
  \ ,
\end{equation} 
where we used \Cref{eq:rep-loop}.
The result on the right hand side is known as a $3j$~symbol.
It is proportional to the magnitude of the vertex, and, depending
on the vertex normalization, it may thus assume different values.
We will keep the normalization of the $3j$~symbol arbitrary for
most of our derivations, although our final results are stated
in the normalization where all $3j$s are normalized to 1. This is 
in contrast to the standard QCD normalization for which, for
example
$\raisebox{-0.6cm}{\includegraphics[scale=0.5]{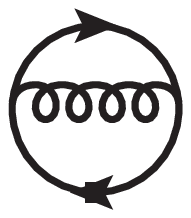}}= \frac{1}{2}(N^2-1)$,
for the generator normalization $\text{tr}[t^a
t^b]=\frac{1}{2}\delta^{a b}$.

After the self-energy, with the topology of a loop involving
two internal representations, the next structure to consider
is the vertex correction. Here, we also encounter the
Wigner-$6j$~symbols for the first time, as they act as normalization
constants when eliminating loops with three internal
lines,
  \begin{equation}
    \label{eq:vertex-correction}
    \RPic{GenRep-alpha-sigma-rhoSTAR--VCorr-delta-gamma-betaSTAR}
    \; =
    {\color{black!60}\sum_a}
    \;
    \frac{1}{\FPic{3j-sigmaSTAR-alpha-rho-VertexLabelsVa}}
    \;
    \underbrace{
      \RPic{6j-delta-rho-sigma--beta-gamma-alpha--CornerLabelsV3a}
    }_{\text{Wigner-$6j$}}
    \hspace{2mm}
    \RPic{GenRep-alpha-sigma-rhoSTAR--V--VLabela}
    \ ,
  \end{equation} 
  where the sum over all possible vertices $a$ collapses to only one
  term if the irreps $\alpha$, $\sigma$ and $\rho$ admit only
  one vertex. For the particular $6j$~symbols discussed in this paper,
  this is always the case. The triangular pictogram in
  \Cref{eq:vertex-correction} describes a Wigner-$6j$ symbol which, in
  the case of $\SU{2}$, is usually denoted in 2-line notation as
  \begin{equation}
    \label{eq:SU2-Wigner-6j}
    \RPic{6j-delta-rho-sigma--beta-gamma-alpha}
    \quad
    \xlongrightarrow{\SU{2}}
    \quad
    \begin{Bmatrix}
      j_{\beta} &  j_{\gamma} & j_{\alpha} \\
      j_{\delta} &  j_{\rho} & j_{\sigma}
    \end{Bmatrix}
    \ .
  \end{equation}

  For loops with more than three internal representations there is no
  similar simple expression. Instead, such color structures can be
  systematically reduced into loops with fewer internal lines by the
  application of the completeness relation
  \begin{equation}
    \label{eq:completeness-relation}
    \RPic{GenRep-beta-gamma}
    \; =
    \sum_{\delta}
    \frac{d_{\delta}}{\RPic{3j-gammaSTAR-beta-delta}}
    \;
    \RPic{GenRep-gammaVbeta-delta-ProjOps}
    \ .
  \end{equation} 
  Applying this to a loop with more internal irreps, the color
  structure can be rewritten in terms of a shorter loop and a sum 
  of vertex corrections, which may be removed using \Cref{eq:vertex-correction}.
  Schematically, we have
  \begin{equation}\label{eq:LoopContraction}
\raisebox{-0.45\height}{
	\includegraphics[scale=0.5]{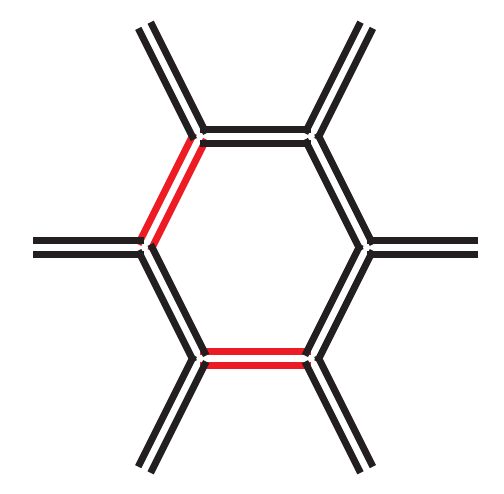}
}
\xlongequal{\eqref{eq:completeness-relation}}
\sum_{\alpha}{
\frac{
	d_\alpha
}{
	\hspace{-2mm}
	\raisebox{-0.45\height}{
		\includegraphics[scale=0.3]{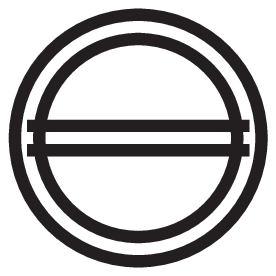}
	}
	\hspace{-2mm}
}
\raisebox{-0.45\height}{
	\includegraphics[scale=0.5]{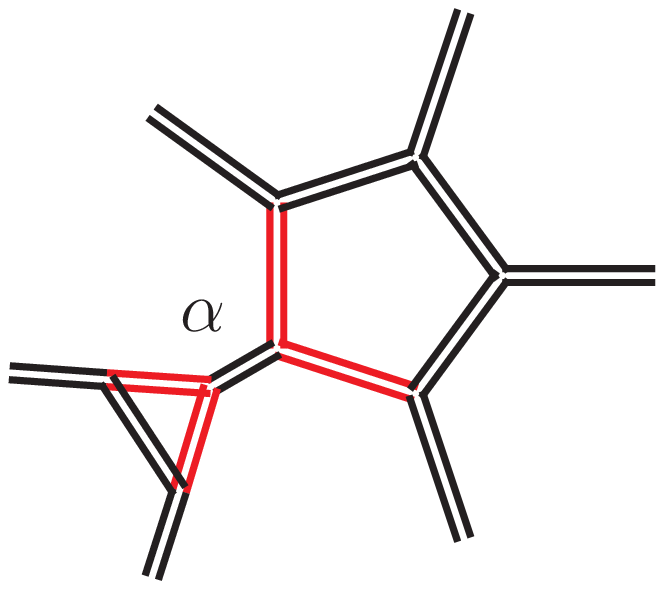}
}
}
\xlongequal{\eqref{eq:vertex-correction}}
\sum_{\alpha}{
\frac{
	d_\alpha
}{
	\hspace{-2mm}
	\raisebox{-0.45\height}{
		\includegraphics[scale=0.3]{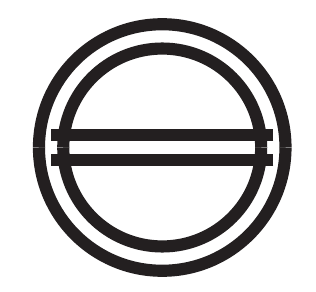}
	}
	\hspace{-2mm}
}
\frac{
	\hspace{-1mm}
	\raisebox{-0.2\height}{
		\includegraphics[scale=0.4]{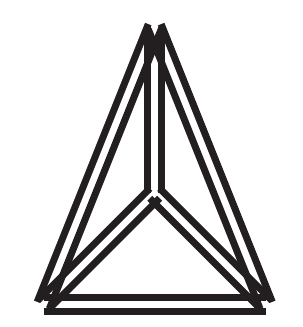}
	}
}{
	\hspace{-2mm}
	\raisebox{-0.45\height}{
		\includegraphics[scale=0.3]{Wig3jNoRepLabels}
	}
	\hspace{-2mm}
}
\raisebox{-0.45\height}{
	\includegraphics[scale=0.5]{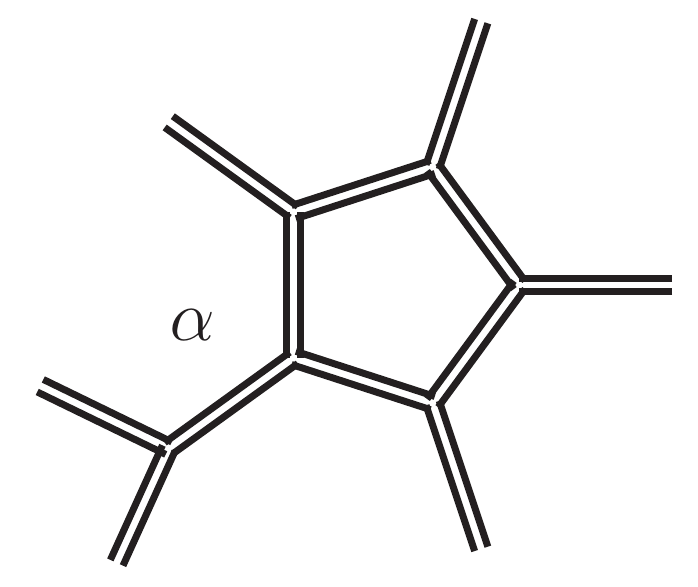}
}
},
\end{equation} 
  which can be fully reduced to $3j$s, $6j$s and dimensions by applying the completeness
  relation two more times.
    In a similar fashion, loops with yet more internal representations
  can be reduced back to expressions involving dimensions and $3j$~and 
  $6j$~coefficients.

  For this reason, to decompose an arbitrary color structure
  into group invariants, it is in principle enough to know the dimensions
  which may be calculated using standard methods (see
  e.g.\ \cite{Cvitanovic:2008zz,Fulton:1997,Sagan:2000}, also
  summarized in~\Cref{app:dimensions}),
  the $3j$~coefficients (which we normalize via the vertices to 1)
  and the $6j$~coefficients, a class of which will be derived here.

  \section{Wigner-\texorpdfstring{$6j$}{6j}~symbols with two opposing
    quark-lines}\label{sec:6j-alpha-add-two-quark-lines}

  In this work we focus on $6j$~symbols with (at least) two
  quark-lines on opposite edges,
\begin{equation}
  \label{eq:6j-Symbol}
  \FPic{6j-Mij-green-Mi--blueSTAR-Mj-alphaSTAR}
  \ ,
\end{equation} 
where the single lines are understood to be in the fundamental
representation, and $\alpha$, $M_i$, $M_j$ and $\bm{M}^{ij}$ are
irreps which can be thought about as Young diagrams. For the main part
of this paper, we will assume that none of the irreps labeled by
$\alpha$, $M_i$, $M_j$ and $\bm{M}^{ij}$ is the fundamental
representation, as this allows us to ignore irrep ordering in vertices
(this is explained in detail in
\Cref{app:properties-vertices}). However, \Cref{app:special} discusses
a few special cases where some of these irreps are indeed the
fundamental representation corresponding to $\ydiagram{1}$.

The Young diagrams $\alpha$, $M_i$, $M_j$ and $\bm{M}^{ij}$ used in
the construction of the $6j$~symbol given in \Cref{eq:6j-Symbol} are
related to each other as follows: 

\begin{itemize}
\item We begin by fixing a Young diagram $\alpha$.
\item Thereafter, we add a single box to $\alpha$ in row $i$
  (resp. $j$) in order to obtain $M_i$ (resp. $M_j$). Note that, in
  general, we cannot add a box to every row of $\alpha$ since in some
  cases the result would not be a Young diagram.
\item Lastly, $\bm{M}^{ij}$ is the diagram obtained from $\alpha$ by
  adding two boxes, first one in row $i$ and then one in row $j$. If
  $M_i$ and $M_j$ both exist, then the final result of adding two
  boxes is commutative, i.e.\
\begin{equation}
  \label{eq:Mij=Mji}
  \bm{M}^{ij} = \bm{M}^{ji}
  \ .
\end{equation} 
\end{itemize}
Examples for the construction of $M_i$, $M_j$ and $\bm{M}^{ij}$ from a
fixed diagram $\alpha$ are given in
\Cref{fig:Example-alpha-Mi-Mj-Mij}.

\begin{figure}[tb]
  \hrule
\centering
\begin{subfigure}[t]{0.49\textwidth}
  \centering
  \FPic{YDAncestry-3-1Pgreen1-Pbluehatched1}
  \caption{For $i=2$ and $j=3$, we obtain $M_2$, $M_3$ and
$\bm{M}^{23}$ from $\alpha$ by adding a green (shaded) box in row $2$
and a blue (hatched) box in row $3$.}
  \label{fig:Example-alpha-M23}
\end{subfigure}\hfill
\begin{subfigure}[t]{0.49\textwidth}
  \centering
  \FPic{YDAncestry-3-1Pgreen1Pbluehatched1}
  \caption{For $i=j=2$ we obtain $M_2$ and $\bm{M}^{22}$ from $\alpha$
  by adding both boxes in row $2$.}
  \label{fig:Example-alpha-M22}
\end{subfigure}
\caption{Two examples constructing $M_i$, $M_j$ and $\bm{M}^{ij}$ from
the Young diagram $\alpha=\ydiagram{3,1}$ for
different values of $i$ and $j$.}
\label{fig:Example-alpha-Mi-Mj-Mij}
\hrule
\end{figure}

We denote the dimensions of the irreps corresponding to $\alpha$, $M_i$, $M_j$
and $\bm{M}^{ij}$ by $d_{\alpha}$, $d_i$, $d_j$, and
$d_{ij}$, respectively,
\begin{equation}
  \label{eq:dimensions-d-Def}
  \text{dim}(\alpha) = d_{\alpha}
  \ , \quad 
  \text{dim}(M_i) = d_i 
  \ , \quad 
  \text{dim}(M_j) = d_j
  \quad \text{and} \quad 
  \text{dim}(\bm{M}^{ij}) = d_{ij}
  \ .
\end{equation} 
These dimensions can be calculated with standard methods from the corresponding
diagrams, see \Cref{app:dimensions}.

Since the $6j$~symbols discussed in this paper have the property that
the two fundamental lines and the irrep indexed by $\alpha$ are fixed,
we may denote the $6j$~symbol in \Cref{eq:6j-Symbol} by $S_{i,j}^{ij}$,
where the bottom two indices $i,j$ correspond to the indices of the
two diagrams $M_i$ and $M_j$, and the upper double-index
corresponds to the double-index of the diagram $\bm{M}^{ij}$,
\begin{equation}
  \label{eq:6j-notation-Def}
  S_{i,j}^{ij}
  =
  \;
    \RPic{6j-Mij-blue-Mi--greenSTAR-Mj-alphaSTAR}
    \ .
\end{equation} 

It should be noted that the irrep $\bm{M}^{ii}$ (both indices equal),
provided it is admissible, is contained in the product
$M_j\otimes\ydiagram{1}$ if and only if $j=i$, as is also illustrated
in \Cref{fig:Example-alpha-M22}. Thus,
\begin{equation}
  \label{eq:6j-upper-indices-equal}
  \text{for fixed $\bm{M}^{ii}$:} \qquad
  S_{a,b}^{ii}
  =
  \delta_{ia}
  \delta_{ib}
  \;
  \FPic{6j-Mii-blue-Ma--greenSTAR-Mb-alphaSTAR}
  \ ,
\end{equation} 
and, similarly, 
\begin{equation}
  \label{eq:6j-upper-indices-fixed}
  \text{for fixed $\bm{M}^{ij}$:} \qquad
  S_{i,b}^{ij}
  =
  \delta_{jb}
  \;
  \FPic{6j-Mij-blue-Mi--greenSTAR-Mb-alphaSTAR}
  \ .
\end{equation} 

Conversely, if we fix the diagrams $M_i$ and $M_j$, the only
irrep labeled by $\bm{M}^{ab}$ that would render the $6j$~symbol
$S_{i,j}^{ab}$ nonzero is precisely that corresponding to $\bm{M}^{ij}$,
\begin{equation}
  \label{eq:6j-lower-indices-fixed}
  \text{for fixed $M_i$ and $M_j$:} \qquad
  S_{i,j}^{ab}
  =
  \delta_{ia}
  \delta_{jb}
  \;
  \FPic{6j-Mab-blue-Mi--greenSTAR-Mj-alphaSTAR}
  \ ,
\end{equation} 
and, if $i=j$,
\begin{equation}
  \label{eq:6j-lower-indices-equal}
  \text{for fixed $M_i$} \qquad
  S_{i,i}^{ab}
  =
  \delta_{ia}
  \;
  \FPic{6j-Mab-blue-Mi--greenSTAR-Mi-alphaSTAR}
  \ .
\end{equation} 

We remark that the $6j$~symbol in \Cref{eq:6j-Symbol} is actually the
most general $6j$ containing two quark-lines not meeting in one
vertex: Even though one might suspect that flipping the direction of
one or several arrows would lead to further $6j$~symbols, this is not
the case as can be deduced from the discussion in
\Cref{sec:properties-2quark-symmetries}. The full set of $6js$ with
two or more quark-lines will, in addition, contain $6j$s where two
quark-lines meet in one vertex. The cases in which both quark-lines
are incoming or in which both quark-lines are outgoing in a vertex are
discussed in \Cref{app:special}. If one of the two quark-lines meeting in
a vertex is incoming and the other is outgoing, we have to
distinguish two cases: Either the two lines form a singlet
(trivial representation), in which case the $6j$ is reduced to a $3j$
(up to normalization), or they are in the adjoint representation,
i.e.\ forming a gluon-line. The latter case will be discussed in a
future publication, together with other $6j$~symbols containing
gluon-lines. The class of $6js$ defined by \eqref{eq:6j-Symbol} is one
that is often encountered in a QCD context,
cf.\ Ref.~\citenum{Sjodahl:2018cca}, Equations.~(2.7) and~(2.8). In future
work we will study the remaining $6j$s needed to decompose color
structure, as well as their applications to QCD.

Before we derive relations between $6j$ symbols of the
form~\eqref{eq:6j-Symbol} in \Cref{sec:relations}, we require some
more properties of these symbols, which will be discussed in the
following \Cref{sec:properties}.

\section{Properties of \texorpdfstring{$6j$}{6j}~symbols}
\label{sec:properties}

The present section discusses several properties of $6j$~symbols that
will be used in this paper. First, we briefly discuss irrep line
orderings in vertices
in~\Cref{sec:line-ordering}. \Cref{sec:properties-graphical-symmetries}
focuses on symmetries of $6j$~symbols, where we, in particular, make
use of the fact that a $6j$~symbol in its graphical representation can
be viewed as a tetrahedron. In \Cref{sec:properties-2quark-symmetries}
we restrict ourselves to the $6j$ symbols of interest in this paper
(i.e.\ those defined in \Cref{sec:6j-alpha-add-two-quark-lines}) and
discuss additional symmetries that arise from having two fundamental
representations on opposite edges of the $6j$.

\subsection{Line ordering in vertices}\label{sec:line-ordering}

In birdtrack calculations we may end up with diagrams that are
complicated to read because of (unnecessary) line crossings. In such
cases it is convenient to introduce barred vertices, indicating that
two lines in a vertex have been swapped, i.e.\ we define
\begin{equation}
  \FPic{GenRep-VertexSTAR-gammaO-betaO-alphaO}
  \; = \; 
  \FPic{GenRep-Vertex-gammaO-betaOSTAR-alphaOSTAR}
  \ .
\end{equation} 
Fortunately, for most vertices appearing in this work, the bars can be
omitted again in the next step as we explain in
\Cref{app:properties-vertices}. The only vertex relevant for this work
for which line swapping leads to a phase change is the vertex that is
used for the antisymmetric projection of two quark-lines, for which we
have
\begin{equation}
  \label{eq:antisymmetrc-two-quark-vertex}
  \FPic{GenRep-Vertex-ASymO-fundISTAR-fundISTAR}
  \; = \;
  - \;
  \FPic{GenRep-Vertex-ASymO-fundI-fundI}
  \ .
\end{equation} 
However, since there are only a few $6j$~symbols satisfying the
conditions laid out in \Cref{sec:6j-alpha-add-two-quark-lines} that
also contain the vertex in \Cref{eq:antisymmetrc-two-quark-vertex}, we
discuss these separately in \Cref{app:special}, and assume for the
remainder of this paper that all vertices appearing in the $6j$~symbols
in question are equal to their barred counterparts,
\begin{equation}
  \label{eq:barred-unbarred-vertices-equal}
  \FPic{GenRep-VertexSTAR-gammaO-betaO-alphaO}
  \; = \; 
  \FPic{GenRep-Vertex-gammaO-betaO-alphaO}
  \hspace{1cm}
  \text{if} \
  \alpha\neq\beta\neq\gamma\neq\alpha
  \ .
\end{equation} 

\subsection{Symmetries of general \texorpdfstring{$6j$}{6j}~symbols}\label{sec:properties-graphical-symmetries}

A Wigner-$6j$~symbol in its graphical \emph{birdtrack} form may be thought of as a
tetrahedron, where our usual notation represents a top-down planar
projection. For example, for a $6j$~symbol connecting general
irreps $\alpha$, $\beta$, $\gamma$, $\delta$, $\rho$ and
$\sigma$, 
  \begin{equation}
    \label{eq:6j-Tetrahedron}
    \RPic{6j-Tetrahedron-delta-rho-sigma--beta-gamma-alpha--CornerLabels}
    \quad \xrightarrow{\text{top-down projection}} \quad 
    \RPic{6j-delta-rho-sigma--beta-gamma-alpha--CornerLabels}
    \ ,
  \end{equation} 
  where we have labeled the vertices of the tetrahedron and the
  corresponding $6j$ as $1,2,3,4$
  for visual clarity. Clearly, which of the four vertices of the tetrahedron we view as the
  top vertex is a completely arbitrary choice and thus cannot
  affect the $6j$~symbol in any way. Therefore, we find that
  \begin{subequations}
    \begin{equation}
      \label{eq:6j-top-corner-symmetry}
      \RPic{6j-delta-rho-sigma--beta-gamma-alpha--CornerLabels}
      \; = \;
      \RPic{6j-rho-gamma-alphaSTAR--sigma-delta-betaSTAR--CornerLabels}
      \; = \;
      \RPic{6j-sigmaSTAR-alpha-betaSTAR--delta-rhoSTAR-gammaSTAR--CornerLabels}
      \; = \;
      \RPic{6j-deltaSTAR-beta-gammaSTAR--rho-sigmaSTAR-alphaSTAR--CornerLabels}
      .
    \end{equation} 
    By that same token, a rotation by $60^{\circ}$ also
    leaves the $6j$~symbol unchanged,
    \begin{equation}
      \label{eq:6j-rotational-symmetry}
      \RPic{6j-delta-rho-sigma--beta-gamma-alpha--CornerLabels}
      \; = \;
      \RPic{6j-sigma-delta-rho--alpha-beta-gamma--CornerLabels}
      \; = \;
      \RPic{6j-rho-sigma-delta--gamma-alpha-beta--CornerLabels}
      \ .
    \end{equation} 
  \end{subequations}

  \subsection{Symmetry properties of \texorpdfstring{$6j$}{6j}~symbols
  with two quark-lines}\label{sec:properties-2quark-symmetries}

  In the present paper, we want to focus on
  $6j$~symbols that were described in \Cref{sec:6j-alpha-add-two-quark-lines},
    \begin{equation}
    \label{eq:6j-Tetrahedron-specialize}
    \RPic{6j-Tetrahedron-delta-rho-sigma--beta-gamma-alpha--CornerLabels}
    \quad \xrightarrow{\text{specialize to}} \quad 
    \RPic{6j-Tetrahedron-Mij-blue-Mi--greenSTAR-Mj-alphaSTAR--CornerLabels}
    \ ,
  \end{equation} 
  which affords us additional symmetries. Firstly, since the two
  quark-lines (the blue and green single lines) are both in the
  \emph{fundamental} representation, we may ``exchange'' them without
  changing the $6j$~symbol,
\begin{equation}
  \label{eq:Sij^ab-reverse-blue-green}
  S_{i,j}^{ij}
  =
  \RPic{6j-Mij-blue-Mi--greenSTAR-Mj-alphaSTAR}
  = 
  \RPic{6j-Mij-green-Mi--blueSTAR-Mj-alphaSTAR}
  \ ;
\end{equation} 
we will, however, continue to draw the two quark-lines in different
colors for visual clarity, as this will make the discussions that
follow (in particular those of \Cref{app:startingpoint}) more legible.

Consider $S_{i,j}^{ab}$ corresponding to the last expression in
\Cref{eq:6j-top-corner-symmetry} and exchange the green and blue
fundamental lines according to \Cref{eq:Sij^ab-reverse-blue-green} to
end up with the following graphical form of $S_{i,j}^{ab}$,
\begin{subequations}
  \label{eq:Sij^ab-Sji^ab-relation-and-Proof}
\begin{equation}
  \label{eq:Sij^ab-Sji^ab-relation-Proof-1}
    S_{i,j}^{ab}
    = \;
    \FPic{6j-Mab-blue-Mi--greenSTAR-Mj-alphaSTAR}
    \xlongequal{\text{\eqref{eq:6j-top-corner-symmetry}}} \;
    \FPic{6j-MabSTAR-greenSTAR-MjSTAR--blue-MiSTAR-alpha}
    \; \xlongequal{\text{\eqref{eq:Sij^ab-reverse-blue-green}}} \; 
    \FPic{6j-MabSTAR-blueSTAR-MjSTAR--green-MiSTAR-alpha}
    \ .
  \end{equation} 
  If we form the complex conjugate of this depiction of the
  $6j$~symbol (which, in the birdtrack formalism, is done by reversing
  all arrows and barring all vertices, cf.
  \Cref{app:properties-vertices}), we will obtain a different
  $6j$~symbol, namely $S_{j,i}^{ab}$ (notice the order of the lower
  indices),
  \begin{equation}
    \label{eq:Sij^ab-Sji^ab-relation-Proof-2}
  (S_{i,j}^{ab})^*
  = \;
  \Vvast( \;
  \FPic{6j-MabSTAR-blueSTAR-MjSTAR--green-MiSTAR-alpha}
  \; \Vvast)^*
  \; = \;
  \FPic{6j-Mab-blue-Mj--greenSTAR-Mi-alphaSTAR--VSTAR1234}
  \; = \;
  \FPic{6j-Mab-blue-Mj--greenSTAR-Mi-alphaSTAR}
  \; =
  S_{j,i}^{ab}
  \ ,
\end{equation} 
\end{subequations}
where we were able to ignore the bars on the vertices in the middle
$6j$~symbol as we assume that all of its vertices obey
\Cref{eq:barred-unbarred-vertices-equal}. (All $6j$ symbols encountered
here, not satisfying this, are the ones involving
\eqref{eq:antisymmetrc-two-quark-vertex}. These are
discussed separately in \Cref{app:special}).

It is easy to convince oneself that all $6j$s naturally can be chosen to
be real by writing out representations in terms of the fundamental
representations and symmetrizers and antisymmetrizers. Contracting
all quark-lines yields a real polynomial in $N$, such that
$S_{i,j}^{ab}\in\mathbb{R}$, and it follows that
$(S_{i,j}^{ab})^*=S_{i,j}^{ab}$, and hence
\begin{equation}
  \label{eq:Sij^ab-Sji^ab-relation}
  S_{i,j}^{ab}
  =
  S_{j,i}^{ab}
  \ .
\end{equation}

In conclusion, for fixed $i,j$ with $i\neq j$, there are four distinct types of $6j$~symbols, namely
\begin{IEEEeqnarray}{rCCCCCCCl}
  & S_{i,i}^{ii} &
  & S_{i,i}^{ij} &
  & S_{j,j}^{ij} &
  & S_{i,j}^{ij} = S_{j,i}^{ij} &
  \nonumber \\
  & \rotatebox[origin=c]{-90}{$=$} & 
  & \rotatebox[origin=c]{-90}{$=$} & 
  & \rotatebox[origin=c]{-90}{$=$} & 
  & \rotatebox[origin=c]{-90}{$=$} & 
  \nonumber \\
  & \RPic{6j-Mii-blue-Mi--greenSTAR-Mi-alphaSTAR} &
  \ , \quad
  & \RPic{6j-Mij-blue-Mi--greenSTAR-Mi-alphaSTAR} &
  \ , \quad
  & \RPic{6j-Mij-blue-Mj--greenSTAR-Mj-alphaSTAR} &
  \quad \text{and} \quad
  & \RPic{6j-Mij-blue-Mi--greenSTAR-Mj-alphaSTAR} &
  \ .
  \label{eq:distinct-6j}
\end{IEEEeqnarray} 
In the following \Cref{sec:closed-form-expressions} we proceed to first derive
relations between these four $6j$~symbols, and then solve this system
of equations to obtain their closed form expressions.

\section{Closed form expressions of~\texorpdfstring{$6j$}{6j}~symbols}
\label{sec:closed-form-expressions}

In the present section, we present several relations between the four
$6j$~symbols given in \Cref{eq:distinct-6j}. In~\Cref{sec:solving}, we
use these relations to find closed form expressions of the
$6j$~symbols. These expressions are summarized
in~\Cref{thm:6j-closed-form-expression}, which is the main result of
this paper.

\subsection{Relations between \texorpdfstring{$6j$}{6j}~symbols}
\label{sec:relations}

Through the repeated use of the completeness
relation \Cref{eq:completeness-relation} and the vertex
correction \Cref{eq:vertex-correction}, we find the following
relations between the four distinct $6j$~symbols given in
\Cref{eq:distinct-6j} (the derivations can be found in
\Cref{app:startingpoint}):
\begin{subequations}
  \label{eq:relations-6j}
  \begin{enumerate}
  \item For a given representation $\bm{M}^{ij}$, we obtain
    \begin{equation}
      \label{eq:relations-6j-1}
      1
      =
      (d_i)^2 (S_{i,i}^{ij})^2
      +
      d_i d_j (S_{i,j}^{ij})^2
       \ .
     \end{equation} 
     Furthermore, 
    \begin{equation}
      \label{eq:relations-6j-2}
      0
      =
      d_i S_{i,i}^{ij} S_{i,j}^{ij}
      +
      d_j S_{i,j}^{ij} S_{j,j}^{ij}
      \ .
    \end{equation} 
    
   \item For two given representations $M_i$ and $M_j$, we obtain
     \begin{equation}
       \label{eq:relations-6j-3}
       \frac{1}{d_{\alpha}}
       =
       \sum_{\bm{M}^{ab}}
       d_{ab} (S_{i,j}^{ab})^2
       \ ,
     \end{equation} 
     where $d_{ab}$ is the dimension of the representation
     $\bm{M}^{ab}$.

   \item For a given representation $M_i$, we have
     \begin{equation}
       \label{eq:relations-6j-linear}
       1
       =
       \sum_b
       d_{ib}
       S^{ib}_{i,i}
       \ .
     \end{equation} 
  \end{enumerate}
\end{subequations}
Notice that, in these relations, all $3j$~symbols were set to $1$; the
derivations in~\Cref{app:startingpoint} keep all $3j$s explicit until
the very last step.

\subsection{Solving for closed form expressions}
\label{sec:solving}

In the present section, we derive closed form expressions for the
$6j$~symbols using the relations presented in the
previous~\Cref{sec:relations}. For the purpose of this section, we
assume that all the $6j$s appearing in these relations are admissible
(that is boxes can be added in rows $i$ and $j$ for the particular
diagram $\alpha$ from which we start).

Let us start with \Cref{eq:relations-6j-1} :
Notice that, if we choose the representation $\bm{M}^{ij}$ such that $i=j$, $\bm{M}^{ij}\to\bm{M}^{ii}$, the second term
vanishes in accordance with \Cref{eq:6j-upper-indices-equal},
such that \Cref{eq:relations-6j-1} reduces to
\begin{IEEEeqnarray}{rCl}
  \text{\Cref{eq:relations-6j-1}}
  \quad
  \xrightarrow{\bm{M}^{ij}\to\bm{M}^{ii}}
  \quad
  1 &=&
  (d_i)^2 (S_{i,i}^{ii})^2
  \nonumber \\
  \Longleftrightarrow \quad
  S_{i,i}^{ii}
  &=&
  \pm
  \frac{1}{d_i}
  \ . \label{eq:Sii-ii-result-up-to-sign}
\end{IEEEeqnarray} 
In fact, in~\Cref{sec:Sii-ii-signs} we show that this $6j$~symbol is
always positive, such that
\begin{equation}
  \label{eq:Sii-ii-result}
  S_{i,i}^{ii} = \frac{1}{d_i}
  \ .
\end{equation}

If in \Cref{eq:relations-6j-3} we instead choose $i\neq j$ (i.e.\ we choose $M_i$ and
$M_j$ to be inequivalent), there is
only one possible $\bm{M}^{ab}$ that renders the $6j$ nonzero, namely
$\bm{M}^{ij}$ (this follows from
condition~\eqref{eq:6j-lower-indices-fixed}). Thus, the
sum on the right hand side of \Cref{eq:relations-6j-3} reduces to
one term, allowing us to solve for yet another $6j$~symbol,
\begin{IEEEeqnarray}{rCl}
  \text{\Cref{eq:relations-6j-3}}
  \quad
  \xrightarrow{i\neq j}
  \quad
  \frac{1}{d_{\alpha}}
  &=&
  d_{ij} (S_{i,j}^{ij})^2
  \nonumber \\
  \Longleftrightarrow \quad
  S_{i,j}^{ij}
  &=&
  \pm \frac{1}{\sqrt{d_{\alpha} d_{ij}}}
  \ . \label{eq:Sij-ij-result}
\end{IEEEeqnarray} 
In~\Cref{sec:Sij-ij-signs} we explain how the overall sign of
$S_{i,j}^{ij}$ can be chosen.

We may now plug the result for $S_{i,j}^{ij}$ back into the full form of \Cref{eq:relations-6j-1} to
also obtain a closed form expression for $S_{i,i}^{ij}$,
\begin{IEEEeqnarray}{rCl}
  1
  =
  (d_i)^2 (S_{i,i}^{ij})^2
  +
  d_i d_j (S_{i,j}^{ij})^2
  \quad
  \xrightarrow{(S_{i,j}^{ij})^2=\frac{1}{d_{\alpha} d_{ij}}}
  \quad
  1
  &=&
  (d_i)^2 (S_{i,i}^{ij})^2
  +
  \frac{d_i d_j}{d_{\alpha} d_{ij}}
  \nonumber \\
  \Longleftrightarrow \quad
  S_{i,i}^{ij}
  &=&
  \pm
  \frac{1}{d_i}
  \sqrt{1-\frac{d_i d_j}{d_{\alpha} d_{ij}}}
  \ . \label{eq:Sii-ij-result}
\end{IEEEeqnarray} 
Lastly, since by \Cref{eq:Sij-ij-result} $S_{i,j}^{ij}\neq0$, we
may now derive the last $6j$~symbol, namely $S_{j,j}^{ij}$, using relation~\eqref{eq:relations-6j-2}:
\begin{IEEEeqnarray}{rCl}
  0
  =
  d_i S_{i,i}^{ij} S_{i,j}^{ij}
  +
  d_j S_{j,j}^{ij} S_{i,j}^{ij}
  & = &
  (d_i S_{i,i}^{ij} + d_j S_{j,j}^{ij})
  S_{i,j}^{ij}
  \nonumber \\
  \xRightarrow{S_{i,j}^{ij}\neq0}
  \quad
  d_j S_{j,j}^{ij}
  & = &
  - d_i S_{i,i}^{ij}
  =
  \mp
  \sqrt{1-\frac{d_i d_j}{d_{\alpha} d_{ij}}}
  \ . \label{eq:Sjj-ij-result}
\end{IEEEeqnarray} 
In~\Cref{sec:Sii-ij-signs} we show how the signs of $S_{i,i}^{ij}$
(and hence $S_{j,j}^{ij}$) can be uniquely determined from
\Cref{eq:relations-6j-linear} for $N\leq3$.

In summary:
\begin{theorem}[Closed form expressions for the distinct $6j$~symbols]\label{thm:6j-closed-form-expression}
  For each of the distinct $6j$~symbols identified
  in~\Cref{eq:distinct-6j}, we obtain the following closed form
  expression:
  \begin{equation}
    \label{eq:6j-closed-form-expression}
    S_{i,i}^{ii}
    =
    \frac{1}{d_i}
    \quad , \qquad
    d_i S_{i,i}^{ij}
    =
    \pm
    \sqrt{1-\frac{d_i d_j}{d_{\alpha} d_{ij}}}
    =
    - d_jS_{j,j}^{ij}
    \quad , \qquad
    S_{i,j}^{ij}
    =
    \pm
    \frac{1}{\sqrt{d_{\alpha} d_{ij}}}
    \ ,
  \end{equation} 
  where the sign of the $S_{i,j}^{ij}$ depends on the definition of
  the vertices as discussed in~\Cref{sec:Sij-ij-signs}.  The signs of
  $S_{i,i}^{ij}$ (equivalently $S_{j,j}^{ij}$) can be uniquely
  determined from \Cref{eq:relations-6j-linear} for $N\leq3$,
  cf.~\Cref{sec:Sii-ij-signs}.
\end{theorem}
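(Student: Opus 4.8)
The plan is to treat the four relations collected in \Cref{eq:relations-6j} as the given input — derived in \Cref{app:startingpoint} from the completeness relation \Cref{eq:completeness-relation} and the vertex correction \Cref{eq:vertex-correction} — and to extract the four unknowns $S_{i,i}^{ii}$, $S_{i,j}^{ij}$, $S_{i,i}^{ij}$ and $S_{j,j}^{ij}$ by purely algebraic elimination. Since the three relations \Cref{eq:relations-6j-1}, \Cref{eq:relations-6j-2} and \Cref{eq:relations-6j-3} are each quadratic (or bilinear) in the $6j$s, I expect them to pin down the magnitudes immediately, leaving the signs to be settled separately via the single linear relation \Cref{eq:relations-6j-linear}.

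First I would exploit the selection rules \Cref{eq:6j-upper-indices-equal} and \Cref{eq:6j-lower-indices-fixed} to collapse the sums in the quadratic relations to single terms. Specializing \Cref{eq:relations-6j-1} to the degenerate diagram $\bm{M}^{ij}\to\bm{M}^{ii}$ kills the second summand (by \Cref{eq:6j-upper-indices-equal}) and leaves $(d_i)^2(S_{i,i}^{ii})^2=1$, hence $|S_{i,i}^{ii}|=1/d_i$. Likewise, for $i\neq j$ only the term $\bm{M}^{ab}=\bm{M}^{ij}$ survives in \Cref{eq:relations-6j-3} (by \Cref{eq:6j-lower-indices-fixed}), giving $d_{ij}(S_{i,j}^{ij})^2=1/d_{\alpha}$ and thus $|S_{i,j}^{ij}|=1/\sqrt{d_{\alpha}d_{ij}}$. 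With this in hand I would substitute $(S_{i,j}^{ij})^2=1/(d_{\alpha}d_{ij})$ back into the full \Cref{eq:relations-6j-1} to solve for the remaining magnitude $|S_{i,i}^{ij}|=\frac{1}{d_i}\sqrt{1-d_id_j/(d_{\alpha}d_{ij})}$, and finally use the bilinear relation \Cref{eq:relations-6j-2}: since $S_{i,j}^{ij}\neq0$ it may be factored out to leave the clean linear tie $d_iS_{i,i}^{ij}+d_jS_{j,j}^{ij}=0$, which expresses the fourth unknown in terms of the third.

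The genuinely hard part is the signs. The quadratic relations are invariant under flipping the sign of any individual $6j$, so they carry no sign information; only the positivity argument of \Cref{sec:Sii-ii-signs} fixes $S_{i,i}^{ii}=+1/d_i$, while the sign of $S_{i,j}^{ij}$ remains a free choice tied to the vertex convention (\Cref{sec:Sij-ij-signs}). The last relation \Cref{eq:relations-6j-linear} is linear rather than quadratic, so it is the only lever capable of distinguishing $+|S_{i,i}^{ij}|$ from $-|S_{i,i}^{ij}|$. I would therefore close the argument by reading off the signs of $S_{i,i}^{ij}$ (and, via the tie above, of $S_{j,j}^{ij}$) from \Cref{eq:relations-6j-linear}, cf.~\Cref{sec:Sii-ij-signs}, and I expect the obstruction to show up exactly here: demonstrating that this sign assignment is globally consistent — i.e.\ that the linear constraint never contradicts itself as $\alpha$, $i$ and $j$ are varied — is the delicate step, which is why the statement only claims it unconditionally for $N\leq3$.
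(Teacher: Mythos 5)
Your proposal follows essentially the same route as the paper: specialize \Cref{eq:relations-6j-1} to $\bm{M}^{ii}$ and \Cref{eq:relations-6j-3} to $i\neq j$ using the selection rules to get $|S_{i,i}^{ii}|$ and $|S_{i,j}^{ij}|$, substitute back into the full \Cref{eq:relations-6j-1} for $|S_{i,i}^{ij}|$, factor the nonzero $S_{i,j}^{ij}$ out of \Cref{eq:relations-6j-2} to tie $S_{j,j}^{ij}$ to $S_{i,i}^{ij}$, and then handle signs exactly as in \Cref{app:signs} (positivity of $S_{i,i}^{ii}$, vertex convention for $S_{i,j}^{ij}$, and the linear relation for $S_{i,i}^{ij}$ with $N\leq3$). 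This matches the paper's derivation in \Cref{sec:solving} step for step, including your correct identification of the $N>3$ sign-consistency issue as the remaining obstruction.
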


We remark that, with the above, the problem of calculating Wigner-$6j$
symbols with quark-lines on opposing edges has been reduced to finding
the dimensions of the representations. Once these are known,
given that a starting representation $\alpha$ results in a maximal
number of new $6j$~symbols, the scaling of finding all
relevant $6j$s (of the given form) with $n$ boxes is therefore given
by the number of possible representations $\alpha$ with $n-2$ boxes,
which scales as $n$ for $N=3$. This implies that finding all
$6j$s with \textit{up to} $n$ boxes scales only as $n^2$ for $N=3$.

\section{Conclusions and outlook}
\label{sec:conclusions}

 In this paper we have taken the first steps towards deriving
  Wigner $6j$~coefficients in terms of $\SUN$ group invariants only
  by writing down closed form expressions of a set of $6j$s involving at least
  two fundamental representations. 

  We are presently supplementing this with a limited set of $6j$s
  involving the adjoint representation, which will be enough to allow
  for a complete color decomposition of amplitudes in QCD
  \cite{Sjodahl:2018cca}.  While the $6j$s with two quark-lines are
  expressed in terms of dimensions only, the gluon $6j$s require the
  quark $6j$s.  Beyond this one may anticipate that yet more general
  $6j$s would be similarly expressible.
  
  Once complemented with the gluon $6j$s, we expect these sets
  of $6j$s to have significant phenomenological relevance
  by opening up, for the first time, the possibility to
  work with orthogonal physical states also for processes
  involving many partons.

\section*{Acknowledgments}
MS acknowledges support by the Swedish Research Council (contract
number 2016-05996, as well as the European Union’s Horizon 2020
research and innovation programme (grant agreement No 668679).  MS and
SP have in part also been supported by the European Union’s Horizon
2020 research and innovation programme as part of the Marie
Sklodowska-Curie Innovative Training Network MCnetITN3 (grant
agreement no. 722104).  JAZ is thankful to the Alexander von
Humboldt Foundation for support via the Fellowship for
Postdoctoral Researchers, as well as to the Erwin Schr{\"o}dinger
Institute for their support via the Junior Research Fellowship.  This
stay at ESI was essential for the completion of this work. We are
also grateful to the Erwin Schr\"odinger Institute Vienna for
hospitality and support while significant parts of this work have been
started within the Research in Teams programme “Amplitude Level
Evolution II: Cracking down on color bases” (RIT0521).

\appendix

\section{Relating different $6j$ symbols}
\label{app:startingpoint}

In this appendix, we derive
\Cref{eq:relations-6j-1,eq:relations-6j-2,eq:relations-6j-3,eq:relations-6j-linear},
which were used to obtain the closed form expressions of the
$6j$~symbols given in \Cref{thm:6j-closed-form-expression}. These
derivations make extensive use of the birdtrack formalism introduced
in \Cref{sec:computational}.

\subsection{Proof of
  \texorpdfstring{\Cref{eq:relations-6j-1,eq:relations-6j-2}}{rels.~a
    and~b}}

Let $\alpha$ be a particular Young diagram, and let $M_i$ and
$\bm{M}^{ij}$ be obtained from $\alpha$ in accordance with
\Cref{sec:6j-alpha-add-two-quark-lines}. Then, we may
consider the following birdtrack diagram,
\begin{equation}
  \label{eq:relations-6j-1and2-LHS}
  \RPic{GenRep-green-alphaVblue--MiVgreen--Mij}
  \ .
\end{equation} 
We may insert a completeness relation (cf.
\Cref{eq:completeness-relation}) between $\alpha$ and the green (top)
quark-line to obtain,
\begin{equation}
  \label{eq:relations-6j-1and2-completeness-relation}
  \RPic{GenRep-green-alphaVblue--MiVgreen--Mij}
  \ = \
  \sum_b \frac{d_b}{\RPic{3j-alphaSTAR-green-Mb}}
  \hspace{3mm}
  \RPic{GenRep-alphaVgreen-Mb-ProjOps--alphaVblue}\hspace{-0.25mm}
  \RPic{GenRep-V--MiVgreen--Mij}
  \ ;
\end{equation} 
from this it is clear that the diagrams $M_b$ are also obtained
from $\alpha$ by adding a single box (corresponding to the top quark-line).

The vertex correction on the right-hand side
of \Cref{eq:relations-6j-1and2-completeness-relation} gives rise to a
$6j$~symbol (cf.\ \Cref{eq:vertex-correction}),
\begin{equation}
  \label{eq:relations-6j-1and2-vertex-correction}
  \RPic{GenRep-Mb--alphaVgreen--alphaVblue--alphaVgreen--Mij}
  \ = \
  \frac{1}{\RPic{3j-blueSTAR-Mb-Mij}} \hspace{3mm}
  \RPic{6j-Mij-blue-Mi--greenSTAR-Mb-alphaSTAR}
  \hspace{3mm}
  \RPic{GenRep-MbVblue--Mij}
  \ ,
\end{equation} 
and allows us to rewrite
\Cref{eq:relations-6j-1and2-completeness-relation} as
\begin{equation}
  \label{eq:relations-6j-1and2-reduced}
  \RPic{GenRep-green-alphaVblue--MiVgreen--Mij}
  \ = \
  \sum_b \frac{d_b}{\RPic{3j-alphaSTAR-green-Mb}
  \ \RPic{3j-blueSTAR-Mb-Mij}}
  \hspace{3mm}
  \RPic{6j-Mij-blue-Mi--greenSTAR-Mb-alphaSTAR}
  \hspace{3mm}
  \RPic{GenRep-alphaVgreen-blue--MbVblue--Mij}
  \ .
\end{equation}

\paragraph{Proof of~\Cref{eq:relations-6j-1}:}\label{sec:relations-6j-1-Derivation}

Consider the Hermitian conjugate of the expression in
\Cref{eq:relations-6j-1and2-LHS} (formed by flipping the birdtrack
about the vertical axis and reversing all arrows, cf.\ Ref.~\citenum{Cvitanovic:2008zz}),
\begin{equation}
  \label{eq:relations-6j-1and2-LHS-HC}
  \left( \;
    \RPic{GenRep-green-alphaVblue--MiVgreen--Mij} \;
  \right)^{\dagger}
  = \;
  \RPic{GenRep-green-alphaVblue--MiVgreen--Mij--HC}
  \ ,
\end{equation} 
and multiply it from the right onto
\Cref{eq:relations-6j-1and2-reduced},
\begin{multline}
  \label{eq:relations-6j-1-reduced}
  \RPic{GenRep-green-alphaVblue--MiVgreen--Mij}\hspace{-0.1mm}
  \RPic{GenRep-green-alphaVblue--MiVgreen--HC}
  \ = \\
  = \
  \sum_b \frac{d_b}{\RPic{3j-alphaSTAR-green-Mb}
  \ \RPic{3j-blueSTAR-Mb-Mij}}
  \hspace{3mm}
  \RPic{6j-Mij-blue-Mi--greenSTAR-Mb-alphaSTAR}
  \hspace{3mm}
  \RPic{GenRep-alphaVgreen-blue--MbVblue--Mij}
  \RPic{GenRep-green-alphaVblue--MiVgreen--HC}
  \ .
\end{multline} 
Let us now take the trace of this equation: the left-hand side yields a product
of $3j$~symbols,
\begin{equation}
  \label{eq:relations-6j-1-LHS-Trace}
  \RPic{GenRep-green-alpha-blue--TraceLeft}
  \RPic{GenRep-green-alphaVblue--MiVgreen--Mij--Trace}\hspace{-0.1mm}
  \RPic{GenRep-green-alphaVblue--MiVgreen--HC-Trace}
  \RPic{GenRep-green-alpha-blue--TraceRight}
  \ = \
  \frac{\RPic{3j-MiSTAR-green-Mij}}{d_i}
  \hspace{3mm}
  \RPic{3j-blueSTAR-alpha-Mi}
  \ ,
\end{equation} 
while the trace of the birdtrack on the right-hand side gives us yet another $6j$~symbol,
\begin{equation}
  \label{eq:relations-6j-1-RHS-Trace}
  \RPic{GenRep-green-alpha-blue--TraceLeft}
  \RPic{GenRep-alphaVgreen-blue--MbVblue--Mij--Trace}\hspace{-0.1mm}
  \RPic{GenRep-green-alphaVblue--MiVgreen--HC-Trace}
  \RPic{GenRep-green-alpha-blue--TraceRight}
  \ =
  \RPic{6j-Mij-green-Mb--blueSTAR-Mi-alphaSTAR--VSTAR1234}
  \ = \
  \RPic{6j-Mij-green-Mb--blueSTAR-Mi-alphaSTAR}
   ;
\end{equation} 
we were able to ignore the bars on all vertices as we assume that all
representations meeting in any particular vertex are distinct,
cf.\ \Cref{eq:barred-unbarred-vertices-equal} (special cases not
obeying this property are discussed separately in \Cref{app:special}).

Putting all of these pieces together, the trace
of \Cref{eq:relations-6j-1-reduced} amounts to the following
expression,
\begin{equation}
  \label{eq:relations-6j-1-symbols}
  \frac{\RPic{3j-MiSTAR-green-Mij}}{d_i}
  \hspace{3mm}
  \RPic{3j-blueSTAR-alpha-Mi}
  \ = \
  \sum_b \frac{d_b}{\RPic{3j-alphaSTAR-green-Mb}
  \ \RPic{3j-blueSTAR-Mb-Mij}}
  \hspace{3mm}
  \underbrace{
    \RPic{6j-Mij-blue-Mi--greenSTAR-Mb-alphaSTAR}
    \hspace{3mm}
    \RPic{6j-Mij-blue-Mb--greenSTAR-Mi-alphaSTAR}
  }_{S^{ij}_{i,b}S^{ij}_{b,i}=(S^{ij}_{i,b})^2}
  \ .
\end{equation} 
Since we are allowed to set all the $3j$~symbols simultaneously to
$1$, this reduces to
\begin{equation}
  \label{eq:relations-6j-1-sum}
  1
  =
  d_i
  \sum_b
  d_b
  (S^{ij}_{ib})^2
  \ .
\end{equation} 
Now, since $\bm{M}^{ij}$ is fixed, the only way for the $6j$~symbol
$S^{ij}_{i,b}$ to be nonzero is if $b\in\lbrace{i,j}\rbrace$
(cf.\ \Cref{eq:6j-upper-indices-fixed}). Hence, the sum on the
right-hand side in \Cref{eq:relations-6j-1-sum} only has two terms,
leaving us with the desired relation~\eqref{eq:relations-6j-1},
\begin{equation}
  \label{eq:relations-6j-1-repeat}
  1
  =
  (d_i)^2
  (S^{ij}_{i,i})^2
  +
  d_i
  d_j
  (S^{ij}_{i,j})^2
  \ .
\end{equation}

\paragraph{Proof of~\Cref{eq:relations-6j-2}:}\label{sec:relations-6j-2-Derivation}

In an analogous way in which we built up the diagram in
\Cref{eq:relations-6j-1and2-LHS}, let us now consider the diagram
\begin{equation}
  \label{eq:relations-6j-2-diagram-multiply}
  \RPic{GenRep-green-alphaVblue--MjVgreen--Mij}
  \ ,
\end{equation} 
such that $i\neq j$ --- in other words $M_i$ and $M_j$ label 
\emph{inequivalent} irreps. Similarly to what we did in
\Cref{eq:relations-6j-1-reduced}, let us take the Hermitian conjugate
of~\eqref{eq:relations-6j-2-diagram-multiply} and multiply it onto
\Cref{eq:relations-6j-1and2-reduced} from the right-hand side,
\begin{multline}
  \label{eq:relations-6j-2-reduced}
  \RPic{GenRep-green-alphaVblue--MiVgreen--Mij}\hspace{-0.1mm}
  \RPic{GenRep-green-alphaVblue--MjVgreen--HC}
  \ = \\
  = \
  \sum_b \frac{d_b}{\RPic{3j-alphaSTAR-green-Mb}
  \ \RPic{3j-blueSTAR-Mb-Mij}}
  \hspace{3mm}
  \RPic{6j-Mij-blue-Mi--greenSTAR-Mb-alphaSTAR}
  \hspace{3mm}
  \RPic{GenRep-alphaVgreen-blue--MbVblue--Mij}
  \RPic{GenRep-green-alphaVblue--MjVgreen--HC}
  \ .
\end{multline} 
Again, we will take the trace of this equation: Since $i\neq j$ (that
is $M_i$ and $M_j$ label inequivalent irreps), the left-hand side
vanishes,
\begin{equation}
  \label{eq:relations-6j-2-LHS-Trace}
  \RPic{GenRep-green-alpha-blue--TraceLeft}
  \RPic{GenRep-green-alphaVblue--MiVgreen--Mij--Trace}\hspace{-0.1mm}
  \RPic{GenRep-green-alphaVblue--MjVgreen--HC-Trace}
  \RPic{GenRep-green-alpha-blue--TraceRight}
  \ = \
  0
  \ .
\end{equation} 
The trace of the birdtrack on the right-hand side once again gives us a $6j$~symbol,
\begin{equation}
  \label{eq:relations-6j-2-RHS-Trace}
  \RPic{GenRep-green-alpha-blue--TraceLeft}
  \RPic{GenRep-alphaVgreen-blue--MbVblue--Mij--Trace}\hspace{-0.1mm}
  \RPic{GenRep-green-alphaVblue--MjVgreen--HC-Trace}
  \RPic{GenRep-green-alpha-blue--TraceRight}
  \ = 
  \RPic{6j-Mij-green-Mb--blueSTAR-Mj-alphaSTAR--VSTAR1234}
  \xlongequal[\eqref{eq:barred-unbarred-vertices-equal}]
  {\eqref{eq:Sij^ab-reverse-blue-green}}
  \RPic{6j-Mij-blue-Mb--greenSTAR-Mj-alphaSTAR}
  \ .
\end{equation} 
Putting all of the pieces together, we obtain the following relation,
\begin{equation}
  \label{eq:relations-6j-2-symbols}
  0
  \ = \
  \sum_b \frac{d_b}{\RPic{3j-alphaSTAR-green-Mb}
  \ \RPic{3j-blueSTAR-Mb-Mij}}
  \hspace{3mm}
  \underbrace{
    \RPic{6j-Mij-blue-Mi--greenSTAR-Mb-alphaSTAR}
    \hspace{3mm}
    \RPic{6j-Mij-blue-Mb--greenSTAR-Mj-alphaSTAR}
  }_{=S^{ij}_{i,b} S^{ij}_{b,j}}
  \ .
\end{equation} 
We will again set the $3j$~symbols to $1$ such that this equation
becomes
\begin{equation}
  \label{eq:relations-6j-2-sum}
  0
  =
  \sum_b
  d_b
  S^{ij}_{i,b} S^{ij}_{b,j}
  \ .
\end{equation} 
Recalling that, since $\bm{M}^{ij}$ is fixed, the only way for
the $6j$~symbols to be nonzero is if $b\in\lbrace{i,j}\rbrace$ in
accordance with \Cref{eq:6j-upper-indices-fixed}. Therefore, the sum
in \Cref{eq:relations-6j-2-sum} again only has two terms, leaving us
with the desired result,
\begin{equation}
  \label{eq:relations-6j-2-repeat}
  0
  =
  d_i
  S^{ij}_{i,i} S^{ij}_{i,j}
  +
  d_j
  S^{ij}_{i,j} S^{ij}_{j,j}
  \ .
\end{equation}

\subsection{Proof of~\texorpdfstring{\Cref{eq:relations-6j-3}}{rel.c}}\label{sec:relations-6j-3-Derivation}

Let $\alpha$ be a particular Young diagram and let $M_i$ and $M_j$ be
obtained from $\alpha$ by adding a box to row $i$ and $j$,
respectively (in accordance with
\Cref{sec:6j-alpha-add-two-quark-lines}). Then, we may consider the
following birdtrack diagram,
  \begin{equation}
  \label{eq:relations-6j-3-LHS}
  \RPic{GenRep-green-alphaVblue--Mi}
  \RPic{GenRep-green-blueValpha--alphaVgreen--Mj--greenValpha}
  \ .
\end{equation} 
We may now insert a completeness relation between $M_i$ and the green (top)
quark-line to obtain
\begin{equation}
  \label{eq:relations-6j-3-completeness-Mi-green}
  \RPic{GenRep-green-alphaVblue--Mi}
  \RPic{GenRep-green-blueValpha--alphaVgreen--Mj--greenValpha}
  \ = \
  \sum_{\bm{M}^{ab}}
  \frac{d_{ab}}{\RPic{3j-MiSTAR-green-Mab}}
  \hspace{3mm}
  \RPic{GenRep-green-alphaVblue}
  \hspace{-0.2mm}
  \RPic{GenRep-V--MiVgreen--Mab--ProjOps}
  \RPic{GenRep-green-blueValpha--alphaVgreen--Mj--greenValpha}
  \ .
\end{equation} 
On the right hand side of this equation we obtain a $6j$~symbol from
the vertex correction,
\begin{IEEEeqnarray}{rCl}
  \label{eq:relations-6j-3-vertex-correction}
  \RPic{GenRep-Mab--MiVgreen--alphaVblue--alphaVgreen--Mj-blue}
  & = &
  \frac{1}{\RPic{3j-blueSTAR-Mj-Mab}} \hspace{3mm}
  \RPic{6j-Mab-green-Mj--blueSTAR-Mi-alphaSTAR--VSTAR1234}
  \RPic{GenRep-Mab-MjVblue}
  \nonumber
  \\
  & \ \xlongequal[\eqref{eq:barred-unbarred-vertices-equal}]
  {\eqref{eq:Sij^ab-reverse-blue-green}} \ &
  \frac{1}{\RPic{3j-blueSTAR-Mj-Mab}} \hspace{3mm}
  \RPic{6j-Mab-blue-Mj--greenSTAR-Mi-alphaSTAR}
  \RPic{GenRep-Mab-MjVblue}
  \ ,
\end{IEEEeqnarray} 
such that \Cref{eq:relations-6j-3-completeness-Mi-green} reduces to
\begin{multline}
  \label{eq:relations-6j-3-RHS-reduced}
  \RPic{GenRep-green-alphaVblue--Mi}
  \RPic{GenRep-green-blueValpha--alphaVgreen--Mj--greenValpha}
  \ = \\
  = \
  \sum_{\bm{M}^{ab}}
  \frac{d_{ab}}{\RPic{3j-MiSTAR-green-Mab}
  \ \RPic{3j-blueSTAR-Mj-Mab}}
  \hspace{3mm} 
  \RPic{6j-Mab-blue-Mj--greenSTAR-Mi-alphaSTAR}
  \hspace{3mm}
  \RPic{GenRep-green-alphaVblue--MiVgreen--Mab}
  \RPic{GenRep-MjVblue--greenValpha}
  \ .
\end{multline} 
Let us now take the trace of
\Cref{eq:relations-6j-3-RHS-reduced}: When tracing the birdtrack
diagram on the left-hand side, we simply get a product of $3j$~symbols with a
dimension factor,
\begin{equation}
  \label{eq:relations-6j-3-LHS-Trace}
  \RPic{GenRep-green-alpha-blue--TraceLeft}
  \RPic{GenRep-green-alphaVblue--Mi--Trace}
  \RPic{GenRep-green-blueValpha--alphaVgreen--Mj--greenValpha--Trace}
  \RPic{GenRep-green-alpha-blue--TraceRight}
  \ = \
  \frac{\RPic{3j-alphaSTAR-green-Mj}}{d_{\alpha}}
  \hspace{3mm}
  \RPic{3j-blueSTAR-alpha-Mi}
  \ .
\end{equation} 
The trace of the birdtrack on the right-hand side
of \Cref{eq:relations-6j-3-RHS-reduced} yields another $6j$~symbol,
\begin{equation}
  \label{eq:relations-6j-3-RHS-Trace}
  \RPic{GenRep-green-alpha-blue--TraceLeft}
  \RPic{GenRep-green-alphaVblue--MiVgreen--Mab--Trace}
  \RPic{GenRep-MjVblue--greenValpha--Trace}
  \RPic{GenRep-green-alpha-blue--TraceRight}
  \ = \
  \RPic{6j-Mab-blue-Mi--greenSTAR-Mj-alphaSTAR}
  \ .
\end{equation} 
Substituting expressions~\eqref{eq:relations-6j-3-LHS-Trace}
and~\eqref{eq:relations-6j-3-RHS-Trace} back into the traced
\Cref{eq:relations-6j-3-RHS-reduced} yields
\begin{equation}
  \label{eq:relations-6j-3-Trace}
  \frac{\RPic{3j-alphaSTAR-green-Mj}}{d_{\alpha}}
  \hspace{3mm}
  \RPic{3j-blueSTAR-alpha-Mi}
  \ = \
  \sum_{\bm{M}^{ab}}
  \frac{d_{ab}}{\RPic{3j-MiSTAR-green-Mab}
  \ \RPic{3j-blueSTAR-Mj-Mab}}
  \hspace{3mm}
  \underbrace{
    \RPic{6j-Mab-blue-Mj--greenSTAR-Mi-alphaSTAR}
    \hspace{3mm}
    \RPic{6j-Mab-blue-Mi--greenSTAR-Mj-alphaSTAR}
  }_{S_{j,i}^{ab}S_{i,j}^{ab}=(S_{i,j}^{ab})^2}
  ,
\end{equation} 
where we used \Cref{eq:Sij^ab-Sji^ab-relation} to write
$S_{j,i}^{ab}S_{i,j}^{ab}=(S_{i,j}^{ab})^2$ ($6j$s for which this
relation does not hold are discussed separately in \Cref{app:special}).
We once again use the fact that we may set all the $3j$~symbols
simultaneously to $1$ to finally obtain the desired
\Cref{eq:relations-6j-3},
\begin{equation}
  \label{eq:relations-6j-3-Repeat}
  \frac{1}{d_{\alpha}}
  =
  \sum_{\bm{M}^{ab}} d_{ab} (S_{i,j}^{ab})^2
  \ ,
\end{equation} 
again with the only exception of $6j$-symbols involving the antisymmetric
vertex in \cref{eq:antisymmetrc-two-quark-vertex}.

\subsection{Proof of the linear
  relation~\texorpdfstring{\Cref{eq:relations-6j-linear}}{d}}\label{sec:relations-6j-linear-Derivation}

Let $M_i$ be a particular Young diagram obtained from $\alpha$ by
adding a single box, and consider the following birdtrack diagram

\begin{equation}
  \label{eq:relations-6j-linear-LHS}
  \RPic{GenRep-green-alphaVblue--Mi--blueValpha}
  \ .
\end{equation} 
Let us now insert a completeness relation (cf.
\Cref{eq:completeness-relation}) between $M_i$ and the
green (top) quark-line,
\begin{equation}
  \label{eq:relations-6j-linear-completeness-relation}
  \RPic{GenRep-green-alphaVblue--Mi--blueValpha}
  \ =
  \sum_{\bm{M}^{ab}}
  \frac{d_{ab}}{\RPic{3j-MiSTAR-green-Mab}}
  \hspace{3mm}
  \RPic{GenRep-green-alphaVblue}
  \hspace{-0.2mm}
  \RPic{GenRep-V--MiVgreen--Mab--ProjOps}
  \RPic{GenRep-green-alphaVblue--HC}
  \ .
\end{equation} 
If we were to take a trace of this equation, we would obtain a bunch
of $3j$~symbols, which is not particularly interesting. However,
instead of merely taking a trace, let us first ``swap'' the two
quark-lines (i.e.\ we multiply
\Cref{eq:relations-6j-linear-completeness-relation} with a
transposition between the two quark-lines from the right). This is a
perfectly legal thing to do as the two quark-lines are both in the
fundamental representation by definition,
\begin{equation}
  \label{eq:relations-6j-linear-swap}
  \RPic{GenRep-green-alphaVblue--Mi--blueValpha}
  \RPic{GenRep-greenSblue-alpha}
  \ =
  \sum_{\bm{M}^{ab}}
  \frac{d_{ab}}{\RPic{3j-MiSTAR-green-Mab}}
  \hspace{3mm}
  \RPic{GenRep-green-alphaVblue}
  \hspace{-0.2mm}
  \RPic{GenRep-V--MiVgreen--Mab--ProjOps}
  \RPic{GenRep-green-alphaVblue--HC}
  \RPic{GenRep-greenSblue-alpha}
  \ .
\end{equation} 
If we now take the trace of this equation, the left-hand side will still yield a $3j$~symbol,
\begin{equation}
  \label{eq:relations-6j-linear-LHS-Trace}
  \RPic{GenRep-green-alpha-blue--TraceLeft}
  \RPic{GenRep-green-alphaVblue--Mi--blueValpha--Trace}
  \RPic{GenRep-greenSblue-alpha--Trace}
  \RPic{GenRep-green-alpha-blue--TraceRight}
  \ = \
  \RPic{3j-blueSTAR-alpha-Mi}
  \ ,
\end{equation} 
but the right-hand side yields a $6j$~symbol,
\begin{equation}
  \label{eq:relations-6j-linear-RHS-Trace}
  \RPic{GenRep-green-alpha-blue--TraceLeft}
  \RPic{GenRep-green-alphaVblue--Trace}
  \hspace{-0.2mm}
  \RPic{GenRep-V--MiVgreen--Mab--ProjOps--Trace}
  \hspace{-0.2mm}
  \RPic{GenRep-green-alphaVblue--HC--Trace}
  \RPic{GenRep-greenSblue-alpha--Trace}
  \RPic{GenRep-green-alpha-blue--TraceRight}
  \ = \
  \RPic{6j-Mab-green-Mi--blueSTAR-Mi-alphaSTAR--VSTAR24}
  \ =
  S^{ab}_{i\overline{\imath}}
  \ .
\end{equation} 
The symbol $S^{ab}_{i,\overline{\imath}}$ is similar to the $6j$~symbol $S^{ab}_{i,i}$,
except the bar over one of the indices, $\overline{\imath}$, indicates that the
vertices adjacent to one representation line $M_i$ have been conjugated.

Putting the pieces together, we find that
\begin{equation}
  \label{eq:relations-6j-linear-Trace}
  \RPic{3j-blueSTAR-alpha-Mi}
  \ =
  \sum_{\bm{M}^{ab}}
  \frac{d_{ab}}{\RPic{3j-MiSTAR-green-Mab}}
  \hspace{3mm}
  \underbrace{
    \RPic{6j-Mab-green-Mi--blueSTAR-Mi-alphaSTAR--VSTAR24}
  }_{S^{ab}_{i,\overline{\imath}}}
   \ .
 \end{equation} 
 Once again, we ignore the conjugated vertices on the $6j$~symbol
 $S^{ab}_{i,\overline{\imath}}$ as we assume that
 \Cref{eq:barred-unbarred-vertices-equal} holds, and refer the reader
 to \Cref{app:special} for all $6j$~symbols for which the
 assumption~\eqref{eq:barred-unbarred-vertices-equal} is not
 valid. Thus, we have that
 $S^{ab}_{i,\overline{\imath}}=S^{ab}_{i,i}$, and
 \Cref{eq:relations-6j-linear-Trace} reduces to
\begin{equation}
  \RPic{3j-blueSTAR-alpha-Mi}
  \ =
  \sum_{\bm{M}^{ab}}
  \frac{d_{ab}}{\RPic{3j-MiSTAR-green-Mab}}
  \hspace{3mm}
  \underbrace{
    \RPic{6j-Mab-green-Mi--blueSTAR-Mi-alphaSTAR}
  }_{S^{ab}_{i,i}}
  \ .
 \end{equation} 
 Lastly, setting all $3j$~symbols to $1$, we obtain the desired
 \Cref{eq:relations-6j-linear},
\begin{equation}
  \label{eq:relations-6j-linear-Repeat}
  1
  =
  \sum_b
  d_{ib}
  S^{ib}_{i,i}
  \ ,
\end{equation} 
where we used the fact that at least one of the indices $a,b$ (which
one doesn't matter due to \Cref{eq:Mij=Mji}) must be equal to $i$ for
the $6j$ symbol to be nonzero, cf.
\Cref{eq:6j-lower-indices-equal}.

\section{Fixing the sign ambiguity}
\label{app:signs}

In this section, we will determine the signs for the $6j$~symbols
$S_{i,i}^{ii}$, $S_{i,j}^{ij}$ and
$S_{i,i}^{ij}=-\frac{d_j}{d_i}S_{j,j}^{ij}$ in
\Cref{sec:Sii-ii-signs,sec:Sij-ij-signs,sec:Sii-ij-signs},
respectively. For the first two cases ($S_{i,i}^{ii}$ and
$S_{i,j}^{ij}$), the overall sign can be determined in an
$N$-independent way. For the last case
($S_{i,i}^{ij}=-\frac{d_j}{d_i}S_{j,j}^{ij}$) we are able to determine
the signs uniquely for $N\leq3$, but argue that further work is needed
to reliably determine the signs for $N>3$.  \ytableausetup{boxsize=3mm}

\subsection{Fixing the sign
  of~\texorpdfstring{$S_{i,i}^{ii}$}{$Sii-ii$}}\label{sec:Sii-ii-signs}

We start by determining the sign of $S_{i,i}^{ii}$: Let us now recall
that, by the definition of the $6j$~symbols given in
\Cref{sec:6j-alpha-add-two-quark-lines}, $i$ denotes the row of
$\alpha$ at the end of which the new box was added. In particular,
this means that the two boxes were added to the \emph{same} row for
the symbol $S_{i,i}^{ii}$, implying that the two quark-lines enter symmetrically
in $M^{ii}$. Let us now re-draw the 6j~symbol somewhat:
\begin{equation}
  \label{eq:Sii^ii-fix-signs-1}
  S_{i,i}^{ii}
  =
  \RPic{6j-Mii-blue-Mi--greenSTAR-Mi-alphaSTAR}
  =
  \RPic{6jsquare-Mii-Mi-alpha-Mi--green-blue--VSTAR23}
  \; \xlongequal{\text{\eqref{eq:barred-unbarred-vertices-equal}}} \;
  \RPic{6jsquare-Mii-Mi-alpha-Mi--green-blue}
  \ .
\end{equation} 
Rewriting the crossed fundamental lines in the last birdtrack as a sum
of symmetrizers and antisymmetrizers, $\FPic{2s12SNArr} \; = \;
  \FPic{2ArrLeft}
  \FPic{2Sym12SN}
  \FPic{2ArrRight}
  \; - \;
  \FPic{2ArrLeft}
  \FPic{2ASym12SN}
  \FPic{2ArrRight}$ (see, for example, Ref.~\citenum{Cvitanovic:2008zz}), we obtain
\begin{equation}
  \label{eq:Sii^ii-fix-signs-2}
  S_{i,i}^{ii}
  = \;
  \RPic{6jsquare-Mii-Mi-alpha-Mi--green-blue}
  \; = \;
  \RPic{6jsquare-Mii-Mi-alpha-Mi--green-blue-Sym12}
  \; - \;
  \RPic{6jsquare-Mii-Mi-alpha-Mi--green-blue-ASym12}
  \ .
\end{equation} 
Since the two quark-lines enter symmetrically in $M^{ii}$, the last
term in \Cref{eq:Sii^ii-fix-signs-2} vanishes. More precisely,
rewriting the triple product
$\alpha\otimes\ydiagram{1}\otimes\ydiagram{1}$ as
$\left(\alpha\otimes\ydiagram{2}\right) \oplus
\left(\alpha\otimes\ydiagram{1,1}\right)$ and decomposing the result
into irreps, one can check that $M^{ii}$ appears in
$\alpha\otimes\ydiagram{2}$ but not in
$\alpha\otimes\ydiagram{1,1}$. Thus, writing
$\FPic{2ArrLeft} \FPic{2Sym12SN} \FPic{2ArrRight} \; = \; \frac{1}{2}
\left( \; \FPic{2IdSNArr} \; + \; \FPic{2s12SNArr} \; \right)$,
\begin{equation}
  \label{eq:Sii^ii-fix-signs-3}
  S_{i,i}^{ii}
  = \;
  \RPic{6jsquare-Mii-Mi-alpha-Mi--green-blue}
  \; = 
  \frac{1}{2}
  \left( 
    \RPic{6jsquare-Mii-Mi-alpha-Mi--green-blue-s12}
    \; + \;
    \RPic{6jsquare-Mii-Mi-alpha-Mi--green-blue}
  \right)
  \ .
\end{equation} 
Recognizing the second term of the right-hand side as $\frac{1}{2}S_{i,i}^{ii}$ and
taking it to the left-hand side, we notice that the right-hand side reduces to a product of
$3j$~symbols with a dimension factor,
\begin{equation}
  \label{eq:Sii^ii-fix-signs-4}
  S_{i,i}^{ii}
  = \;
  \RPic{6jsquare-Mii-Mi-alpha-Mi--green-blue}
  \; = \;
  \RPic{6jsquare-Mii-Mi-alpha-Mi--green-blue-s12}
  \; = \;
  \frac{\RPic{3j-MiSTAR-greenblue-Mii}}{d_i}
  \cdot
  \RPic{3j-greenblueSTAR-Mi-alpha}
  \ .
\end{equation} 
Again setting the $3j$~symbols to $1$, we obtain that
\begin{equation}
  \label{eq:1}
  S_{i,i}^{ii}
  =
  \frac{1}{d_i}
  \ ,
\end{equation} 
now with a definite sign. We comment that, in determining the sign of
$S_{i,i}^{ii}$, we actually rederived $S_{i,i}^{ii}$ with a definite
sign. However, since similar methods will not work for 
$S_{i,j}^{ii}$, $S_{i,j}^{jj}$, and $S_{i,j}^{ij}$, we view this as a
consistency check and continue to use Equations~\eqref{eq:relations-6j} to
derive the functional forms of the remaining $6j$~symbols.

\subsection{Fixing the sign
  of~\texorpdfstring{$S_{i,j}^{ij}$}{$Sij-ij$} for $i\ne j$}
\label{sec:Sij-ij-signs}

Consider the graphical notation for $S_{i,j}^{ij}$,
\begin{equation}
  \label{eq:Sij-ij-signs-graphical-notation}
  S_{i,j}^{ij}
  =
  \RPic{6j-Mij-blue-Mi--greenSTAR-Mj-alphaSTAR--CornerLabels}
  \ .
\end{equation} 
Notice that each vertex occurs exactly once in this $6j$~symbol,
\begin{equation}
  \label{eq:Sij-ij-signs-graphical-notation-vertices}
  \RPic{6jVertex-greenSTAR-Mj-alphaSTAR--1}
  \ , \quad
  \RPic{6jVertex-Mij-blueSTAR-Mj--2}
  \ , \quad 
  \RPic{6jVertex-Mi-MijSTAR-greenSTAR--3}
  \ , \quad
  \RPic{6jVertex-blue-MiSTAR-alpha--4}
  \ .
\end{equation} 
The overall sign of the $6j$~symbol $S_{i,j}^{ij}$ is uniquely
determined by how one decided to define the vertices that occur in
this $6j$~symbol. When iteratively computing these $6j$~symbols,
starting with small Young diagrams and then adding more and more
boxes, we may encounter vertices that were already used for earlier
$6j$s. If all four vertices have been encountered earlier then the
sign of the $6j$~symbol may already be fixed. Otherwise we can pick
the sign of the $6j$ to be, say, positive, thus imposing a constraint
on signs of the vertices. Often a pair of vertices (either vertices 1
and 4 or vertices 2 and 3) is encountered for the first time in the
$6j$~symbol to compute. Then, picking the sign of the $6j$ puts a
constraint on the product of the signs of the newly encountered vertex
pair.
Since fully contracted color structures consist of dimensions,
$3j$~symbols and $6j$~symbols \emph{only}, the information used and
obtained in an iterative computation of $6j$~symbols is sufficient to
perform calculations in color space.

\subsection{Fixing the sign
  of~\texorpdfstring{$S_{i,i}^{ij}$}{$Sii-ij$} for $i \ne j$
  (equivalently~\texorpdfstring{$S_{j,j}^{ij}$}{$Sjj-ij$})}\label{sec:Sii-ij-signs}

Lastly, we turn to the $6j$~symbols $S_{i,i}^{ij}$ and, equivalently,
$S_{j,j}^{ij}$, whose functional form is given in
\Cref{thm:6j-closed-form-expression}. We notice that, for these
$6j$~symbols, each vertex occurs together with its complex conjugated
version. Thus, merely the vertex definitions do not determine the overall
sign but a different method has to be chosen. 

In the present section we will discuss how the linear
relation~\eqref{eq:relations-6j-linear} can be used to fix the signs
of $S_{i,i}^{ij}$ and $S_{j,j}^{ij}$ for $N\leq3$, and comment on why additional
work is needed to reliably fix the signs beyond $N=3$. However, since our
focus lies on physics applications (which will be discussed in a
future paper), it is sufficient to fix the signs for $N=3$, where
$N$ is interpreted as the number of colors $N_c$.

We require two preliminary results:

\begin{lemma}[Determining the relative signs in a sum]\label{lem:sum-nonzero-subsum}
  Consider the set of \emph{known}, positive, real numbers
  $\lbrace{A_i}\rbrace_{i=1}^{k}$, and suppose that
  \begin{equation}
    \label{eq:sum-nonzero-subsum-Cond1}
    \sum_{i=1}^k \chi_i
      A_i
    =
    C
    \ ,
  \end{equation} 
  where $C\in\mathbb{R}\setminus\lbrace0\rbrace$ is also known, and
  the $\chi_i\in\lbrace{-1,1}\rbrace$ are to be determined. Then, if
  all subsets of $\lbrace{A_i}\rbrace_{i=1}^{k}$ satisfy
  \begin{equation}
    \label{eq:sum-nonzero-subsum-Cond2}
    \sum_{l=1}^{m\leq k} \chi_{j_l}
      A_{j_l}
    \neq
    0
    \ ,
  \end{equation} 
  all the $\chi_i$ can be determined uniquely.
\end{lemma}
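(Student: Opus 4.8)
The plan is to prove uniqueness by contradiction, comparing two putative solutions. Concretely, I would suppose that both $\{\chi_i\}_{i=1}^k$ and $\{\chi_i'\}_{i=1}^k$ satisfy \eqref{eq:sum-nonzero-subsum-Cond1}, so that $\sum_{i=1}^k \chi_i A_i = C = \sum_{i=1}^k \chi_i' A_i$. Subtracting the two identities eliminates the known constant $C$ and leaves $\sum_{i=1}^k (\chi_i-\chi_i')\,A_i = 0$. Existence is not at issue, since in the intended application the relation is furnished by the linear $6j$-identity \eqref{eq:relations-6j-linear}; only the uniqueness of the sign pattern needs to be argued.

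Next I would exploit the discreteness of the signs to collapse this to a sub-sum. Because each $\chi_i,\chi_i'\in\{-1,1\}$, the difference $\chi_i-\chi_i'$ takes only the values $-2$, $0$, or $+2$, and it is nonzero precisely on the disagreement set $S=\{i:\chi_i\neq\chi_i'\}$. On $S$ one has $\chi_i'=-\chi_i$, hence $\chi_i-\chi_i'=2\chi_i$, so the vanishing of the full difference reduces to $\sum_{i\in S}2\chi_i A_i=0$, i.e.\ $\sum_{i\in S}\chi_i A_i=0$. This is exactly a signed sub-sum of the $A_i$ of the form forbidden by the hypothesis, now restricted to the indices in $S$ and carried with the signs of one of the two solutions.

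Finally I would invoke condition \eqref{eq:sum-nonzero-subsum-Cond2}, which asserts that no nonempty sub-sum of the $A_i$, taken with the signs of a genuine solution, can vanish; this forces $S=\varnothing$, whence $\chi_i=\chi_i'$ for all $i$ and the sign assignment solving \eqref{eq:sum-nonzero-subsum-Cond1} is unique. The one delicate point—indeed the crux of the argument—is the correct reading of \eqref{eq:sum-nonzero-subsum-Cond2}: the signs appearing in the derived vanishing sub-sum are those of a solution one is actually solving for, so the hypothesis must be understood as a nonvanishing statement about that sign pattern. Since $\chi_i'=-\chi_i$ on $S$, the same partial sum vanishes for both patterns simultaneously, so it is immaterial which solution one treats as canonical; either way the nonvanishing hypothesis is contradicted, and this is the step on which the whole proof rests.
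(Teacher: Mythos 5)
Your proof is correct and follows essentially the same route as the paper: assume two sign assignments solve the equation, subtract to cancel $C$, and observe that the difference collapses to a vanishing signed sub-sum over the disagreement set, contradicting the non-vanishing hypothesis. The paper phrases the second solution as flipping the signs on a subset $\{j_1,\ldots,j_m\}$ rather than introducing a disagreement set explicitly, but the two arguments are identical in substance, including your (correct) reading of the sub-sum condition as referring to the signs of an actual solution.
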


\begin{proof}[\Cref{lem:sum-nonzero-subsum}]
  We present a proof by contradiction: Let the
  $\left\lbrace{A_i}\right\rbrace_{i=1}^k$ be such that
  conditions~\eqref{eq:sum-nonzero-subsum-Cond1}
  and~\eqref{eq:sum-nonzero-subsum-Cond2} laid out in the lemma are
  satisfied. Suppose now that \Cref{eq:sum-nonzero-subsum-Cond1} does
  not uniquely determine all the $\chi_i\in\lbrace-1,1\rbrace$, that
  is, there exist some
  $\lbrace{j_1,\ldots,j_m}\rbrace\subset\lbrace{1,\ldots,k}\rbrace$
  such that
  \begin{subequations}
    \begin{equation}
      \label{eq:sum-nonzero-subsum-Proof1a}
      \phantom{\text{and} \hspace{2cm}}
      \sum_{i\in\lbrace{1,\ldots,k}\rbrace\setminus\lbrace{j_1,\ldots,j_m}\rbrace}
      \chi_i
      A_i
      +
      \sum_{i\in\lbrace{j_1,\ldots,j_m}\rbrace}
      \chi_i
      A_i
      =
      C
      \phantom{\ .}
    \end{equation} 
    \begin{equation}
      \label{eq:sum-nonzero-subsum-Proof1b}
      \text{and} \hspace{2cm}
      \sum_{i\in\lbrace{1,\ldots,k}\rbrace\setminus\lbrace{j_1,\ldots,j_m}\rbrace}
      \chi_i
      A_i
      -
      \sum_{i\in\lbrace{j_1,\ldots,j_m}\rbrace}
      \chi_i
      A_i
      =
      C
      \ .
    \end{equation} 
  \end{subequations}
  Then, deducting \Cref{eq:sum-nonzero-subsum-Proof1b} from
  \Cref{eq:sum-nonzero-subsum-Proof1a}, we obtain
  \begin{equation}
    \label{eq:sum-nonzero-subsum-Proof2}
    \sum_{i\in\lbrace{j_1,\ldots,j_m}\rbrace}
      \chi_i
      A_i
      =
      0
      \ .
    \end{equation} 
    Thus, we have found a partial sum of the products $\chi_iA_i$ that
    vanishes, which poses a contradiction to~\Cref{eq:sum-nonzero-subsum-Cond2}.
\end{proof}

\begin{lemma}[Adding a box to the first row of a Young diagram]\label{lem:d11-d1-fraction-Nc-restriction}
  Let $\alpha$ be a Young diagram and let $M_1$ and $\bm{M}^{11}$ be
  the diagrams obtained from $\alpha$ by adding one, respectively two,
  box(es) to the first row of $\alpha$. Furthermore, let $d_1$ and
  $d_{11}$ be the dimensions of the irreducible representations
  corresponding to $M_1$ and $\bm{M}^{11}$, respectively. Then,
  \begin{equation}
    \label{eq:d11-d1-fraction-Nc-restriction}
    \frac{d_{11}}{d_1} = 1
    \quad
    \Rightarrow
    \quad
    N \leq 1
    \ ,
  \end{equation} 
  where equality holds if and only if $\alpha$ is the totally
  symmetric diagram consisting of exactly one row.
\end{lemma}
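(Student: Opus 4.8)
The plan is to compute $d_{11}/d_1$ directly from the hook--content dimension formula, $\dim(\lambda)=\prod_{(i,j)\in\lambda}\frac{N+j-i}{h(i,j)}$, exploiting that $M_1$ and $\bm{M}^{11}$ arise from $\alpha$ by lengthening \emph{only} its first row. Writing $\lambda_1\ge\lambda_2\ge\cdots$ for the row lengths of $\alpha$, the two appended boxes occupy the first row at columns $\lambda_1+1$ and $\lambda_1+2$, with contents $\lambda_1$ and $\lambda_1+1$. First I would record how the combinatorial data change: appending a box at the right end of the first row leaves the contents of all pre-existing boxes and the hook lengths of every box \emph{below} the first row untouched, raises each first-row hook length by one, and creates a unit corner hook. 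Forming the quotient $d_{11}/d_1$ then cancels all content factors inherited from $\alpha$ together with every lower-row hook contribution.

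Concretely I expect to land on
\begin{equation}
  \frac{d_{11}}{d_1} = \frac{N+\lambda_1+1}{K}, \qquad K = \prod_{h\in\mathcal{H}}\frac{h+1}{h},
\end{equation}
where $\mathcal{H}$ is the set of the $\lambda_1+1$ (distinct) first-row hook lengths of $M_1$, and $K$ is manifestly independent of $N$. Since $h\mapsto (h+1)/h=1+1/h$ is strictly decreasing and exceeds $1$, the product of $\lambda_1+1$ such factors over \emph{distinct} positive integers is maximised by choosing the smallest admissible values, i.e.\ $\mathcal{H}=\{1,\dots,\lambda_1+1\}$; a one-line swap argument (replace any $h\in\mathcal{H}$ with $h>\lambda_1+1$ by the missing smaller value, strictly increasing the product) makes this precise. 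For $\mathcal{H}=\{1,\dots,\lambda_1+1\}$ the product telescopes to $K=\lambda_1+2$, and this extremal configuration occurs exactly when every first-row box of $M_1$ has vanishing leg length, i.e.\ when there are no rows below the first one. Hence $K\le \lambda_1+2$ with equality if and only if $\alpha$ is totally symmetric with a single row.

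Combining the two observations, $d_{11}/d_1=1$ forces $N+\lambda_1+1=K\le\lambda_1+2$, that is $N\le 1$, with the boundary value $N=1$ attained precisely in the single-row case. (The same bound follows even more quickly from the Weyl product form $d_{11}/d_1=\prod_{j\ge 2}\frac{(\ell_1-\ell_j)+2}{(\ell_1-\ell_j)+1}$ with $\ell_i=\lambda_i+N-i$, each factor exceeding $1$ for integer $N\ge 2$; the single-row characterisation of the boundary then follows from noting that for $N=1$ only one-row diagrams carry nonzero dimension.)

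The main obstacle I anticipate is not the inequality itself but the bookkeeping that yields the clean reduction to $\tfrac{N+\lambda_1+1}{K}$: one must verify carefully that appending boxes at the right end of the first row alters exactly the first-row hook lengths (raising each by one and inserting a unit corner hook) while leaving all lower-row hooks and all pre-existing content factors intact, so that everything except the single content factor $N+\lambda_1+1$ and the shape-dependent constant $K$ cancels. Once this is pinned down, the monotonicity bound on $K$ and its equality case are routine, and the only remaining subtlety is the degenerate $N=1$ situation (where multi-row diagrams have vanishing dimension), which is what one comments on to match the stated ``if and only if''.
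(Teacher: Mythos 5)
Your proof is correct and follows essentially the same route as the paper's: both compute $d_{11}/d_1$ from the factors-over-hooks formula, observe that only the single new content factor and the first-row hook lengths survive the cancellation, bound those (strictly decreasing, hence distinct) hooks from below by their minimal single-row values, and telescope to obtain $N\le 1$ with equality exactly when $\alpha$ has one row. The only differences are cosmetic --- you index the product by the hooks of $M_1$ rather than of $\alpha$ --- apart from your parenthetical Weyl-product remark, which would actually yield a shorter proof of the inequality ($d_{11}>d_1$ for all $N\ge 2$) but is not the argument the paper uses.
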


\begin{proof}[\Cref{lem:d11-d1-fraction-Nc-restriction}]
  Let $h_{a,b}$ be the hook lengths (see \Cref{app:dimensions}) of
  $\alpha$, and denote by $\ell$ the length of $\alpha$'s first
  row. We calculate $d_{11}/d_1$ using the factors-over-hooks formula
  \eqref{eq:factors-over-hooks}. For $\bm{M}^{11}$, compared to $M_1$,
  we obtain just one additional factor, namely $N+\ell+1$ for the last
  box in the first row, and only the hook lengths for the boxes in the
  first row differ. In the quotient $d_{11}/d_1$ all other hook
  lengths cancel, i.e.\
  \begin{equation}
    \frac{d_{11}}{d_1} = (N+\ell+1) \,
    \frac{(h_{1,1}+1) (h_{1,2}+1) \cdots (h_{1,\ell}+1)\cdot 1}
         {(h_{1,1}+2) (h_{1,2}+2) \cdots (h_{1,\ell}+2)\cdot 2} \ .
  \end{equation} 
  Hence, 
  \begin{equation}
    d_{11}/d_1=1
    \quad \Leftrightarrow \quad
    N = \frac{(h_{1,1}+2) (h_{1,2}+2) \cdots (h_{1,\ell}+2)}
             {(h_{1,1}+1) (h_{1,2}+1) \cdots (h_{1,\ell}+1)} \cdot 2
    - \ell - 1 \, .
  \end{equation} 
  The quotients $\frac{h_{1,b}+2}{h_{1,b}+1}$ are maximal if $h_{1,b}$
  is minimal, and $h_{1,b} \geq \ell-b+1$, where equality holds if and
  only if there is only one box in column $b$. Therefore,
  \begin{equation}
    N \leq
    \frac{(\ell+2)\cancel{(\ell+1)}\cdots \cancel{3}}
         {\cancel{(\ell+1)} \ \cdots \ \cancel{3}
    \cdot \bcancel{2}} \cdot \bcancel{2} - \ell -1 = 1 \ ,
\end{equation} 
as required.
\end{proof}

Let's manipulate the linear relation~\eqref{eq:relations-6j-linear} a
little bit: First, we single out the known value
$S^{ii}_{i,i}=\frac{1}{d_i}$ from the sum, 
\begin{equation}
  \label{eq:relations-6j-linear-rewrite-1}
  1
  =
  \sum_b
  d_{ib}
  S^{ib}_{i,i}
  =
  \frac{d_{ii}}{d_i}
  +
  \sum_{b\neq i} d_{ib} S^{ib}_{i,i}
  \ .
\end{equation}

We would like to view the signs of the $6j$~symbols as 
variables to be determined, and therefore define $\chi_{ij}$
\begin{equation}
  \label{eq:chiij-Def}
  \chi_{ij} = \chi_{ji} \in \lbrace-1,1\rbrace
  \ ,
\end{equation} 
such that
\begin{equation}
  \label{eq:6j-with-chi}
  S_{i,i}^{ij} =
  \chi_{ij} \frac{1}{d_i}
  \sqrt{1-\frac{d_i d_j}{d_{\alpha} d_{ij}}}
  \ .
\end{equation} 
Notice that $\chi_{ij}$ does \emph{not} encompass the relative sign
between $S_{i,i}^{ij}$ and $S_{j,j}^{ji}$ but denotes the
\emph{absolute} sign of $S_{i,i}^{ij}$. In other words,
\begin{equation}
  \label{eq:chi-absolute-sign}
  S_{i,i}^{ij} =
  \chi_{ij} \frac{1}{d_i}
  \sqrt{1-\frac{d_i d_j}{d_{\alpha} d_{ij}}}
  \xlongequal{\eqref{eq:6j-closed-form-expression}}
  -
  \chi_{ij} \frac{1}{d_j}
  \sqrt{1-\frac{d_i d_j}{d_{\alpha} d_{ij}}}
  =
  - \frac{d_j}{d_i}
  S_{j,j}^{ji}
  \ .
\end{equation} 
Furthermore, to make the notation a bit shorter, let us define
the symbol $A_{ij}$ as
\begin{equation}
  \label{eq:Aij-Def}
  A_{ij} =
  \frac{d_{ij}}{d_i}
  \sqrt{1-\frac{d_i d_j}{d_{\alpha} d_{ij}}}
  \hspace{1cm}
  \Longrightarrow
  \hspace{1cm}
 \chi_{ij} A_{ij} = d_{ij}S_{i,i}^{ij}
  \ .
\end{equation} 
Clearly, the $d_{ij}$ are symmetric in $i$ and $j$, and so are the
$S_{i,i}^{ij}$ in their upper indices by
\Cref{eq:Mij=Mji},
$S_{i,i}^{ij}=S_{i,i}^{ji}$. However, $S_{i,i}^{ij}$ and $S_{j,j}^{ij}$
are related by a negative pre-factor according to
\Cref{eq:6j-closed-form-expression} in
\Cref{thm:6j-closed-form-expression}, such that the $A_{ij}$ are antisymmetric,
\begin{equation}
  \label{eq:Aij-antisymmetric}
  A_{ji} = - A_{ij}
  \ .
\end{equation}

Then, taking $\frac{d_{ii}}{d_i}$ to the other side of the equal sign
and implementing notation~\eqref{eq:Aij-Def}, 
the linear equations in~\eqref{eq:relations-6j-linear-rewrite-1} can
be cast into matrix form,
\begin{equation}
  \label{eq:relations-6j-linear-rewrite-matrix}
  \begin{pmatrix}
    0 & \chi_{12}A_{12} & \chi_{13}A_{13} & \ldots & \chi_{1N}A_{1N} \\
    -\chi_{12}A_{12} & 0 & \chi_{23}A_{23} & \ldots & \chi_{2N}A_{2N} \\
    -\chi_{13}A_{13}
    & -\chi_{23}A_{23} & 0
    & \ldots & \chi_{3N}A_{3N} \\
    \vdots & \vdots & \vdots & \ddots & \vdots \\
    -\chi_{1N}A_{1N}
    & -\chi_{2N}A_{2N}
    & -\chi_{3N}A_{3N} & \ldots & 0
  \end{pmatrix}
  \begin{pmatrix}
    1 \\
    1 \\
    1 \\
    \vdots \\
    1 \\
  \end{pmatrix}
  =
  \begin{pmatrix}
    1-\frac{d_{11}}{d_1} \\
    1-\frac{d_{22}}{d_2} \\
    1-\frac{d_{33}}{d_3} \\
    \vdots \\
    1-\frac{d_{{N}{N}}}{d_{N}}
  \end{pmatrix}
  \ .
\end{equation} 
Recall that the $A_{ij}$ are known (as the dimensions $d_{\alpha}$,
$d_i$, $d_j$ and $d_{ij}$ are known, cf.\ \Cref{eq:Aij-Def}) and that
we seek to determine the $\chi_{ij}\in\lbrace{-1,1}\rbrace$ for each
pair $(i,j)$.  We shall denote the linear equation resulting from row
$r$ of the matrix
equation~\eqref{eq:relations-6j-linear-rewrite-matrix} by $E(r)$, that
is:
\begin{equation}
  \label{eq:Er-Def}
  E(r): \quad
  - \sum_{i=1}^{r-1} \chi_{ir} A_{ir}
  +
  \sum_{j=r+1}^{N} \chi_{rj} A_{rj}
  =
  1 - \frac{d_{rr}}{d_r}
  \ .
\end{equation} 
Notice that, up to this point, we have not specified a particular
value for $N$ but kept the discussion fully general. From now on, let
us fix $N=3$. (We note that the below argument also works for $N<3$.
At the end of this section, we comment on why the strategy presented
here for determining the signs breaks down for $N>3$.)

\paragraph{For $N=3$,} the matrix
equation~\eqref{eq:relations-6j-linear-rewrite-matrix} simplifies as
\begin{equation}
  \label{eq:NC3-relations-6j-linear-matrix}
  \begin{pmatrix}
    0 & \chi_{12}A_{12} & \chi_{13}A_{13} \\
    -\chi_{12}A_{12} & 0 & \chi_{23}A_{23} \\
    -\chi_{13}A_{13} & -\chi_{23}A_{23} & 0
  \end{pmatrix}
  \begin{pmatrix}
    1 \\
    1 \\
    1
  \end{pmatrix}
  =
  \begin{pmatrix}
    1-\frac{d_{11}}{d_1} \\
    1-\frac{d_{22}}{d_2} \\
    1-\frac{d_{33}}{d_3}
  \end{pmatrix}
  \ .
\end{equation} 

From \Cref{lem:d11-d1-fraction-Nc-restriction} we know that
$\frac{d_{11}}{d_1}\neq1$ for all $N>1$, so, in particular, also for
$N=3$. Therefore, the right-hand side of $E(1)$ is nonzero, which means that
\begin{equation}
  \label{eq:2}
  A_{12} \neq A_{13}
  \qquad \text{and} \qquad
  \text{not both $A_{12}$ and $A_{13}$ are zero}.
\end{equation} 
We distinguish two cases:
\begin{enumerate}
\item If $A_{12}\neq0$ and $A_{13}\neq0$, then both $\chi_{12}$ and
  $\chi_{13}$ can be determined uniquely from $E(1)$ by
  \Cref{lem:sum-nonzero-subsum}.
  \begin{enumerate}
  \item If $A_{23}\neq0$, we may
    also uniquely determine $\chi_{23}$.
  \item If $A_{23}=0$ for $N=3$, the corresponding $6j$~symbols
    $S_{22}^{23}$ and $S_{33}^{23}$ both vanish and hence $\chi_{23}$
    is irrelevant.
  \end{enumerate}
\item If only one of $A_{12}$ and $A_{13}$ is nonzero (i.e.\ only one
  of $\chi_{12}$ and $\chi_{13}$ can be determined uniquely), this
  means that the other $6j$~symbol is zero, making the corresponding
  $\chi_{ij}$ irrelevant. Without loss of generality, suppose that
  $\chi_{12}$ is uniquely determinable and hence
  $S_{11}^{13}=0=S_{33}^{13}$ (the analogous argument can be made if
  only $\chi_{13}$ is uniquely determinable).
  \begin{enumerate}
  \item If $A_{23}\neq0$, we may
    also uniquely determine $\chi_{23}$ from $E(2)$ using our result
    for $\chi_{12}$.
  \item If $A_{23}=0$, the corresponding $6j$~symbols $S_{22}^{23}$
    and $S_{33}^{23}$ both vanish,
    making $\chi_{23}$ irrelevant.
  \end{enumerate}
\end{enumerate}
Therefore, for $N=3$ all signs of the nonzero $6j$~symbols are
uniquely determinable.

Let us briefly comment on possibly non-existing $6j$~symbols: Notice
that, for a particular diagram $\alpha$, boxes may be added only to
the second row but not the third row (or vice versa), implying that
the $6j$~symbols with an index $3$ (resp.~$2$) do not exist. Examples
of these are
\begin{subequations}
  \begin{IEEEeqnarray}{rCl}
    \alpha = \ydiagram{3}
    & \quad \longrightarrow \quad &
    \text{boxes can only be added in rows $1$ and $2$} \\[3mm]
    \alpha = \ydiagram {2,2}
    & \quad \longrightarrow \quad &
    \text{boxes can only be added in rows $1$ and $3$}
    \ .
  \end{IEEEeqnarray} 
\end{subequations}
Then, in both cases, the matrix
equation~\eqref{eq:NC3-relations-6j-linear-matrix} reduces to
something even simpler, namely
\begin{equation}
  \label{eq:NC3-relations-6j-linear-matrix-vanishing-rows}
  \begin{pmatrix}
    0 & \chi_{1j}A_{1j} \\
    -\chi_{1j}A_{1j} & 0 
  \end{pmatrix}
  \begin{pmatrix}
    1 \\
    1
  \end{pmatrix}
  =
  \begin{pmatrix}
    1-\frac{d_{11}}{d_1} \\
    1-\frac{d_{jj}}{d_j} 
  \end{pmatrix}
  \quad
  \text{where $j=2$ (resp. $j=3$)}
  \ ,
\end{equation} 
and hence the signs of the only remaining $6j$~symbols $S^{1j}_{11}$
and $S^{1j}_{jj}$ can be determined
from the sign of $1-\frac{d_{11}}{d_1}\neq0$.

\paragraph{Going beyond~$N=3$:} Notice that for $N=3$, each equation $E(n)$ has
exactly $N-1=2$ terms on the left-hand side. Since the
right-hand side of $E(1)$ is non-zero for all values of $N$, this, in particular,
allows us to uniquely determine the sign of both terms of $E(1)$ for
$N=3$, and thus also for one of the two terms appearing on the
left-hand sides of $E(2)$ and $E(3)$, respectively. This is no longer
the case for $N>3$ as the left hand-sides of equations $E(n)$ contain
$N-1>2$ terms, and further information is needed to uniquely determine
their signs.

\section{Vertex properties}\label{app:properties-vertices}

We discuss the behavior of vertices under line swapping. 
Consider three irreps $\alpha$, $\beta$ and $\gamma$ with
$\gamma^* \subset \alpha \otimes \beta$. For each instance of
$\gamma^*$ in $\alpha \otimes \beta$ we introduce --- for the sake of argument --- two
vertices
\begin{equation}
  \label{eq:two-vertices-with-same-three-irreps}
  \FPic{GenRep-Vertex-CircleBlack-gammaO-alphaO-betaO}
  \hspace{1cm} \text{and} \hspace{1cm} 
  \FPic{GenRep-Vertex-CircleWhite-gammaO-betaO-alphaO}
  \ ,
\end{equation} 
which differ by line ordering (and possibly other ``internal'' vertex structure).
If the multiplicity of $\gamma^*$
in $\alpha \otimes \beta$ is one (or, equivalently, if the multiplicity of $\beta^*$ in $\alpha \otimes \gamma$ is one or, equivalently, if the multiplicity of $\alpha^*$ in $\beta \otimes \gamma$ is one), then the two vertices must be proportional
\begin{equation}
  \label{eq:swap-lines-in-vertex}
  \FPic{GenRep-Vertex-CircleBlack-gammaO-betaOSTAR-alphaOSTAR}
  \; = \;
  c \;
  \FPic{GenRep-Vertex-CircleWhite-gammaO-betaO-alphaO}
  \ ,
\end{equation} 
with some non-zero a priori complex constant $c$. All vertices
appearing in this work have multiplicity one, as can be seen
from Young diagram multiplication, using that in each vertex at least
one line is in the fundamental representation.

Next, we consider the complex conjugates of both vertices, and temporarily introduce different symbols for them,
\begin{equation}
  \left( \;
    \FPic{GenRep-Vertex-CircleBlack-gammaO-alphaO-betaO} \;
  \right)^*
  = \;
  \FPic{GenRep-Vertex-SquareBlack-gammaI-alphaI-betaI}
  \hspace{1cm} \text{and} \hspace{1cm} 
  \left( \;
  \FPic{GenRep-Vertex-CircleWhite-gammaO-betaO-alphaO} \;
  \right)^*
  = \;
  \FPic{GenRep-Vertex-SquareWhite-gammaI-betaI-alphaI}
  \ .
\end{equation} 
The complex conjugate of \Cref{eq:swap-lines-in-vertex} reads
\begin{equation}
  \label{eq:swap-lines-in-vertex-complex-conj}
  \FPic{GenRep-Vertex-SquareBlack-gammaI-betaISTAR-alphaISTAR}
  \; = \;
  c^* \;
  \FPic{GenRep-Vertex-SquareWhite-gammaI-betaI-alphaI}
  \ ,
\end{equation} 
and we can use these two equations in order to relate two $3j$~symbols, 
\begin{equation}
  \RPic{3j-alpha-beta-gamma--Vertex-CircleBlack-SStateT-Label}
  \RPic{3j-SState-ArrRev}
  \RPic{3j-alpha-beta-gamma--Vertex-SquareWhite-SState}
  \; \xlongequal{\text{\eqref{eq:swap-lines-in-vertex}}} \;
  c
  \;
  \RPic{3j-beta-alpha-gamma--Vertex-CircleWhite-SStateT-Label}
  \RPic{3j-s12-SState-ArrRev}
  \RPic{3j-alpha-beta-gamma--Vertex-SquareWhite-SState}
   \; \xlongequal{\text{\eqref{eq:swap-lines-in-vertex-complex-conj}}} \;
  \vert c \vert^2
  \;
  \RPic{3j-beta-alpha-gamma--Vertex-CircleWhite-SStateT-Label}
  \RPic{3j-SState-ArrRev}
  \RPic{3j-alpha-beta-gamma--Vertex-SquareBlack-SState}
   \ .
\end{equation} 
If we normalize all $3j$~symbols in the same way (we prefer to set them
to 1, but the argument also works for any other normalization) then we
conclude that $|c|=1$. In fact, if
$\alpha\neq\beta\neq\gamma\neq\alpha$, we can, and do, always choose
$c=1$, which is the most natural choice.

However, if two of the three irreps meeting in a vertex are
equivalent, say $\beta=\alpha$, then the two vertices defined in
\Cref{eq:two-vertices-with-same-three-irreps} are proportional to
each other, and we make the natural choice
\begin{equation}
  \FPic{GenRep-Vertex-CircleBlack-gammaO-alphaO-alphaO}
  \; = \; 
  \FPic{GenRep-Vertex-CircleWhite-gammaO-alphaO-alphaO}
  \ ,
\end{equation} 
any other choice would give a redundant definition.
In this case \Cref{eq:swap-lines-in-vertex} becomes
\begin{equation}
  \FPic{GenRep-Vertex-CircleBlack-gammaO-alphaOSTAR-alphaOSTAR}
  \; = \;
  c \;
  \FPic{GenRep-Vertex-CircleBlack-gammaO-alphaO-alphaO}
  \ ,
\end{equation} 
and by intertwining the upper two lines in the last equation we also obtain
\begin{equation}
  \FPic{GenRep-Vertex-CircleBlack-gammaO-alphaO-alphaO}
  \; = \;
  c \;
  \FPic{GenRep-Vertex-CircleBlack-gammaO-alphaOSTAR-alphaOSTAR}
  \ .
\end{equation} 
Finally, dividing both sides with $c$, $c=\frac{1}{c}$, and we find $c=\pm1$, i.e.\,
we are left with a sign.
In this work, the only irrep which can appear more than once in a
vertex is the fundamental representation, or, in other words, if two
identical lines meet in a vertex then they are always
quark-lines. Hence, there are only two vertices of this kind relevant
for this work, one with $c=1$ and one with $c=-1$,
\begin{equation}
  \label{eq:two-two-quark-vertices}
  \FPic{GenRep-Vertex-SymO-fundISTAR-fundISTAR}
  \; = \;
  \FPic{GenRep-Vertex-SymO-fundI-fundI}
  \hspace{1cm} \text{and} \hspace{1cm}
  \FPic{GenRep-Vertex-ASymO-fundISTAR-fundISTAR}
  \; = \;
  - \;
  \FPic{GenRep-Vertex-ASymO-fundI-fundI}
  \ ,
\end{equation} 
warranting the use of the same symbol $\bullet$ for all vertices. 

From here on, we again use the same symbol $\bullet$ for both
vertices~$\FPic{GenRep-Vertex-CircleBlack}$
and~$\FPic{GenRep-Vertex-SquareBlack}$~, an let the arrow direction
determine which vertex is intended. We also no longer use the
vertices~$\FPic{GenRep-Vertex-CircleWhite}$
and~$\FPic{GenRep-Vertex-SquareWhite}$ but instead swap lines on the
vertex $\bullet$. If an equation becomes more legible with two
lines swapped in a vertex, we indicate this swapping of lines by a
barred vertex (cf.\ \Cref{sec:line-ordering}), i.e.\ we define
\begin{equation}
  \FPic{GenRep-VertexSTAR-gammaO-betaO-alphaO}
  \; = \; 
  \FPic{GenRep-Vertex-gammaO-betaOSTAR-alphaOSTAR}
  \ ,
\end{equation} 
and for the purpose of this work we only have to keep in mind that
there is exactly one vertex, see
\Cref{eq:two-two-quark-vertices}, for which omitting a bar leads
to a sign change.

\section{Special cases for line ordering in vertices}
\label{app:special}

In \Cref{app:properties-vertices} we explained that we can largely
ignore the line ordering in (barred) vertices of the $6j$~symbols,
since for the $6j$~symbols we study most vertices connect three
\emph{distinct} irrep lines. 
The only exception are vertices with two incoming or two outgoing
quark-lines, and among these vertices only the vertex
\begin{equation}
  \label{eq:special-case-vertex}
  \FPic{GenRep-Vertex-ASymO-fundI-fundI}
\end{equation} 
is antisymmetric in the two quark-lines, see
\Cref{eq:two-two-quark-vertices}.

In order to have two incoming or two outgoing quark-lines in a vertex
within the $6j$~symbols under investigation,
\begin{equation}
  \FPic{6j-Mij-blue-Mi--greenSTAR-Mj-alphaSTAR} \, , 
\end{equation} 
$\alpha$ or $M_i$ or $M_j$ needs to be the fundamental representation,
i.e.\ a quark-line.

If $\alpha=\ydiagram{1}$ then $M_i$ and $M_j$ can be either
$\ydiagram{2}$ or $\ydiagram{1,1}$, and the only $6j$s of this kind
with at least one antisymmetric vertex are
\begin{equation}
  \label{eq:6j-with-ASym-2quark-vertex}
  \FPic{6j-YD111-fund-YD11--fundSTAR-YD11-fundSTAR}
  \ , \quad 
  \FPic{6j-YD21-fund-YD11--fundSTAR-YD11-fundSTAR}
  \ , \quad 
  \FPic{6j-YD21-fund-YD11--fundSTAR-YD2-fundSTAR}
  \quad \text{and} \quad 
  \FPic{6j-YD21-fund-YD2--fundSTAR-YD11-fundSTAR}
  \ ,
\end{equation} 
which have all been explicitly discussed and calculated in
Ref.~\citenum{Sjodahl:2018cca}.

If $M_i=\ydiagram{1}$, then $\bm{M}^{ij}$ can be either $\ydiagram{2}$
or $\ydiagram{1,1}$, from which it follows that also
$M_j=\ydiagram{1}$, and $\alpha$ can be either the trivial
representation (singlet) or the adjoint representation (a
gluon-line). If $\alpha$ is a singlet then the $6j$~symbol, up to
normalization, reduces to a $3j$~symbol. If $\alpha$ is the adjoint
representation, then the only $6j$ of this kind with at least one
antisymmetric vertex is
\begin{equation}
  \FPic{6j-YD11-fund-fund-fundSTAR-fund-adj} \ ,
\end{equation} 
which has also been calculated in Ref.~\citenum{Sjodahl:2018cca}; in
fact, it also reduces, up to normalization, to a $3j$ by the Fierz
identity.

\section{Dimensions of Young diagrams}\label{app:dimensions}

As is clear from \Cref{thm:6j-closed-form-expression}, calculating the
dimensions of the irreps contained in a $6j$ is imperative to
calculating the values of the $6j$~symbols discussed in this
paper. Therefore, we here recapitulate how to calculate these
dimensions directly from the corresponding diagrams.

First, we present the factors-over-hooks formula without
proof; proofs can be found in standard textbooks such
as Refs.~\citenum{Sagan:2000,Fulton:1997,Cvitanovic:2008zz}.

\ytableausetup{boxsize=5mm}
Consider a Young diagram $\lambda$. For each of its cells,
we may define a \emph{factor} and a \emph{hook length} in the
following way:
\begin{itemize}
\item the \emph{factor} $f_{a,b}$ of the cell $c_{a,b}\in\lambda$ in the
  $a^{\text{th}}$ row and the $b^{\text{th}}$ column is defined as
  \begin{equation}
    \label{eq:cell-factor-Def}
    f_{a,b} = b-a
    \ .
  \end{equation} 
\item the \emph{hook length} $h_{a,b}$ of the cell $c_{a,b}$ is defined to be
  the number of cells to the right of $c_{a,b}$ plus the number of cells
  below $c_{a,b}$ plus 1 ($c_{a,b}$ itself).
\end{itemize}
Then, the dimension of the $\SUN$ irrep corresponding to the diagram
$\lambda$ is given by
\begin{equation}
  \label{eq:factors-over-hooks}
  \text{dim}(\lambda)
  =
  \prod_{c_{a,b}\in\lambda}
  \frac{(N+f_{a,b})}{h_{a,b}}
  \ ,
\end{equation} 
where the product runs over all cells $c_{a,b}$ in $\lambda$.
Let us provide an example for illustration: Consider the Young diagram
\begin{equation}
  \label{eq:dim-lambda-Ex1}
  \lambda = \ydiagram{4,2,2,1}
  \ .
\end{equation} 
Then, the factors and hook lengths of each of the cells are
\begin{IEEEeqnarray}{rCCCl}
  & \text{factors:} &
  & \text{hook lengths:} &
  \nonumber \\
  &
  \begin{ytableau}
    0 & 1 & 2 & 3 \\
    -1 & 0 \\
    -2 & -1 \\
    -3
  \end{ytableau}
  & \hspace{2cm} &
  \begin{ytableau}
    7 & 5 & 2 & 1 \\
    4 & 2 \\
    3 & 1 \\
    1
  \end{ytableau}
  &
  \label{eq:dim-lambda-Ex2}
  \ ,
\end{IEEEeqnarray} 
such that the dimension of the irrep corresponding to
$\lambda$ is given by
\begin{equation}
  \label{eq:dim-lambda-Ex3}
  \text{dim}(\lambda)
  =
  \left[
    \frac{N}{7}
    \frac{(N+1)}{5}
    \frac{(N+2)}{2}
    (N+3)
  \right]
  \left[
    \frac{(N-1)}{4}
    \frac{N}{2}
  \right]
  \left[
    \frac{(N-2)}{3}
    (N-1)
    \right]
    \left[
      N-3
    \right]
  \ ,
\end{equation} 
which is zero for $N\leq3$ and, for example, becomes $36$ for $N=4$.
\ytableausetup{boxsize=3mm}

In QCD, a general Fock space sector may contain fundamental,
antifundamental and also adjoint factors. Young diagrams for irreps on
such a sector reflect this by containing the following conglomerates
of boxes,
\begin{IEEEeqnarray}{rCCCCCl}
  \label{eq:reps-Young-diag}
  & \text{fundamental:} &
  \hspace{1.5cm}
  & \text{antifundamental:} &
  \hspace{1.5cm}
  & \text{adjoint:} &
  \nonumber \\[2mm]
  & \FPic{YDiag-1} &
  & \FPic{YDiag-Nm1--NLabel} &
  & \FPic{YDiag-Nm1-1--NLabel} &
  \\[2mm]
  & \text{dim} = N &
  & \text{dim} = N &
  & \text{dim} = N^2-1 \ , &
  \nonumber
\end{IEEEeqnarray} 
where the dimensions can be verified using the factors-over-hooks
formula. For $\SUN$, a column of length $N$ may be crossed out in the
calculation of the dimension (as the hook lengths in this column will
cancel with the factors at the end of the respective rows), but it may
be preferable to not go the route of first adding boxes that will
ultimately be taken away. King~\cite{King:1970} offers such a way
in terms of back-to-back tableaux, where columns of length $N-1$
are represented as boxes that are added \emph{to the left} of the
given Young diagram. We will not review this method of multiplying
diagrams and calculating the corresponding dimensions here but rather
refer readers to the original source, Ref.~\citenum{King:1970}.

\nocite{*}
\bibliography{wigner6j}

\end{document}